\newtheorem{theorem}{Theorem}[section]
\newtheorem{lemma}[theorem]{Lemma}
\newtheorem{cor}[theorem]{Corollary}
\newtheorem{definition}[theorem]{Definition}
\title{VOUTE-Virtual Overlays Using Tree Embeddings}
\author{Stefanie Roos, Martin Beck, Thorsten Strufe \\
TU Dresden \\
\{stefanie.roos,martin.beck1,thorsten.strufe\}@tu.dresden.de}
\date{}
\newcommand{\E}{\mathbb{E}}
\newcommand{\calO}{\mathcal{O}} 
\newcommand{\commIt}[1] {\STATEx \COMMENT{\textit{#1}}}
\newcommand{\commItLine}[1] {\COMMENT{\textit{#1}}}
\newcommand{\routeN}{\mathbf{R}_{node}}
 \newcommand{\find}{\mathbf{R}_{content}}
\newcommand{\stab}{\mathbf{S}}
\newcommand{\ano}{\mathbf{AdGen}_{node}}
\newcommand{\anoC}{\mathbf{AdGen}_{content}}
\newcommand{\route}{\mathbf{R}}
\newcommand{\s}{s}
\newcommand{\e}{e}
\newcommand{\X}{\mathbf{X}}
\newcommand{\dist}{\delta_X}
\newcommand{\id}{\mathit{id}}
\newcommand{\keys}{\mathbb{K}}
\newcommand{\distIndex}[1]{\delta_{#1}}
\newcommand{\Y}{\mathbf{Y}}
\newcommand{\pre}{cord}
\newcommand{\xor}{\oplus}
\newcommand{\lengthAd}{L}
\newcommand{\keysPart}{\tilde{\keys}}
\newcommand{\trees}{\gamma}
\newcommand{\keymac}{\mathbb{K}_{MAC}} 
\newcommand{\Q}{Q} 
\newcommand{\routeTD}{\mathbf{R}^{TD}}
\newcommand{\routeRAPCPL}{\mathbf{R}^{CPL}}
\newcommand{\routePPPCPL}{\mathbf{R}^{PPP}}
\begin{document}
\maketitle
\begin{abstract}
Friend-to-friend (F2F) overlays, which restrict direct communication to mutually trusted parties, are a promising substrate for privacy-preserving communication due to their inherent membership-concealment and Sybil-resistance.
Yet, existing F2F overlays suffer from a low performance, are vulnerable to denial-of-service attacks, or fail to provide anonymity.
In particular, greedy embeddings allow highly efficient communication in arbitrary connectivity-restricted overlays but require communicating parties to reveal their identity.
In this paper, we present a privacy-preserving routing scheme for greedy embeddings based on anonymous return addresses rather than identifying node coordinates.
We prove that the presented algorithm are highly scalalbe, with regard to the complexity of both the routing and the stabilization protocols. 
Furthermore, we show that the return addresses provide plausible deniability for both sender and receiver. 
We further enhance the routing's resilience by using multiple embeddings and propose a method for efficient content addressing.
Our simulation study on real-world data indicates that our approach is highly efficient and effectively mitigates failures as well as powerful denial-of-service attacks.
\end{abstract}

\section{Introduction}

Anonymous and censorship-resistant communication is essential for providing freedom of speech.
In the last years, threats to this essential human right have emerged in western countries as well,
 mainly in the form of self-censorship caused by the fear of seemingly private communication being recorded \footnote{\tiny{\url{http://www.theguardian.com/commentisfree/2013/jun/17/chilling-effect-nsa-surveillance-internet}}}.
 Due to the natural vulnerability of publicly known servers to sabotage and corruption, completely distributed solutions for anonymous communication and content distribution are required.
 However, the openness of distributed systems presents a vulnerability, enabling attackers to infiltrate the system with 
 a large number of forged participants, as can seen e.g., from attacks on the Tor \cite{dingledine2004tor} network  in 2014 \footnote{\tiny{\url{https://blog.torproject.org/blog/ tor-security-advisory-relay-early-traffic-confirmation-attack}}}. 
 
 F2F overlays circumvent the problem of connecting to permanently changing strangers by restricting connectivity to participants sharing a mutual real-world trust relationship.
Hence, adversaries need to resort to social engineering attacks for infiltration of the network.
However, large-scale privacy-preserving communication in F2F overlays requires additional measures to achieve anonymity, failure and attack resilience, and efficiency. 
Multiple studies have shown that deployed F2F overlays such as Freenet \cite{clarke2010private} are highly inefficient and vulnerable to attacks \cite{vasserman2009membership, evans2007routing}.
 Virtual overlays have been proposed as an efficient alternative \cite{vasserman2009membership,mittal2012x}, but recent work has shown that they inherently require unacceptable high stabilization costs \cite{roos2015impossibility}. 

A potential solution is presented by greedy network embeddings such as \cite{kleinberg2007geographic,herzen2011scalable}.
Greedy embeddings allow for highly efficient greedy routing in arbitrary connectivity-restricted overlays. For this purpose, they first construct a spanning tree of the network and then assign coordinates based on a node's position in the spanning tree.
However, a participant can only contact a non-trusted contact when knowing its coordinate in the network.
Though only direct neighbors can directly map the embedding coordinate to a real-world identity, arbitrary participants can reconstruct the social graph based on the revealed coordinates. 
Participants can then easily be identified from the social graph structure \cite{narayanan2009anonymizing} and correlated with their activities in the network due to the coordinate acting as a persistent pseudonym.
In this manner, governmental and commercial institutions as well as curious strangers can track individual or all users and establish detailed profiles of their behavior.
Possibly even their opinions and interests, published unencrypted in a supposedly anonymous manner, are revealed to the adversary. 
Thus, network embeddings in their unaltered form fail to provide receiver anonymity.

For utilizing the high efficiency of network embedding, our first requirement is a  modified addressing and routing protocol that provides both efficiency and (receiver) anonymity. 
Second, due to the fragility of spanning trees in the presence of network dynamics and attacks, the resilience of the embedding to both failures and denial-of-service attacks needs to be drastically increased. 
Rather than only dropping messages, we assume that the adversary first strategically sabotages the embedding algorithm to maximize the impact of its censorship. 
Third, efficient content storage and retrieval requires the existence of a suitable content addressing scheme for network embeddings.

Our solution addresses the above problems by i) introducing anonymous return addresses to provide receiver anonymity, ii) constructing multiple embeddings and using backtracking during routing to increase the resilience, and iii) utilizing the network embedding to provide a routing protocol for a virtual overlay, thus avoiding the enormous stabilization costs of previous virtual overlays.

Our embedding algorithm assigns coordinates in the form of vectors of $b$-bit strings, so that nodes in the same subtree of the spanning tree share the same prefix, similar to the PIE embedding \cite{herzen2011scalable}.  
Rather than revealing the coordinate, the receiver then generates an anonymous return address by applying a hash cascade to the elements of the coordinate vector salted with a random seed.
After publishing the return address and the seed, the receiver can be contacted efficiently without revealing any information that is not required for routing. Furthermore, a node can publish several anonymous return addresses by varying the seed,
In this manner, it can construct distinct pseudonyms for distinct contexts, e.g., one pseudonym for each forum discussion it participates in. 
The revealed information can be further reduced by applying an additional layer of encryption at the price of a reduced efficiency.

By routing in multiple embeddings, we aim to increase the probability of finding a route despite the disruption of routes in some but not all embeddings.
We propose a purely local algorithm for the construction of multiple spanning trees of highly different structures to provide largely node-independent routes.
In addition, backtracking and a modified distance are integrated into the routing to further improve resilience and avoid congestion.

We evaluate our solution both by a formal security analysis and an extensive simulation study.
In the security analysis, we prove a receiver can never be uniquely identified from a return address.
Our simulation study indicates that our scheme is highly efficient compared existing approaches in terms of the number of messages required for routing.
Furthermore, the resilience is greatly improved. In fact, the routing terminates successfully in the vast majority of cases despite the presence of node failures or powerful attackers, which manipulate the embedding and routing as well as forge connections to honest nodes.

\section{Related Work}
\label{sec:related}

Here, we describe the state-of-the-art with regard to routing and content discovery in F2F overlays. 

The common characteristics of all F2F overlays are i) the restriction of connections to trusted parties, ii) \emph{hop-by-hop anonymization}, i.e., the transfer of messages via a path of trusted nodes that rewrite the source tag of the message to point to themselves and apply probabilistic delays before forwarding a message, iii) encryption of all communication.
In the following, we present existing approaches, categorizing them according to their routing methodology in unstructured overlays, virtual overlays, and  network embeddings.
Routing is applied to either discover nodes based on network coordinates or, more commonly, content based on a content key or description.

\emph{Unstructured approaches} utilize \emph{flooding}, e.g. in Turtle \cite{popescu2006safe},  or \emph{probabilistic forwarding} e.g. in OneSwarm \cite{isdal2010privacy}. 
GnuNet attempts to combine random walks with deterministic routing \cite{evans2011r5n}.
These overlays focus locating content rather than individual nodes.
Due to the replication of content, the content can indeed be located, but efficient communication between two uniquely defined entities is not possible.


\emph{Virtual overlays} address the problem of establishing an overlay despite the restricted connectivity by replacing overlay links with tunnels of trusted nodes.
So, efficient tunnel discovery and maintenance is a main concern given the inherent network dynamics:
Vasserman et al. \cite{vasserman2009membership} suggest flooding the network for discovering adequate overlay neighbors, thus creating a large overhead.
In contrast, X-Vine leverages the overlay routing by concatenating previously existing tunnels to a new one, thus entailing a increase of the average tunnel length and hence routing costs over time \cite{mittal2012x}. 
Indeed, without an additional routing protocol in the underlying F2F overlay, 
efficient maintenance and efficient routing are inherently mutually exclusive in a virtual overlay \cite{roos2015impossibility}. 

In contrast, \emph{network embeddings} assign coordinates that allow efficient routing to nodes.
For example, the F2F mode of Freenet relies on a network embedding.
However, results indicate that the embedding is lacking both with regard to routing efficiency \cite{vasserman2009membership} and attack resilience \cite{evans2007routing}.
Hoefer et al. \cite{hofer2013greedy} propose highly efficient \emph{greedy embeddings}.
However, their approach reveals the identity of the communicating parties and fails to consider resilience.
Furthermore, their proposed scheme for content addressing maps the majority of content keys to the same central node.

In summary, network embeddings are the only existing approach providing a high efficiency at acceptable maintenance costs.
However, achieving receiver anonymity,resilience,  and suitably content addressing is an unsolved highly challenging problem.

\section{Adversary Model}
\label{sec:req}

We aim to realize efficient F2F overlays making use of network embeddings but at the same time providing receiver anonymity, resilience, and content addressing.
Note that we do not consider sender anonymity because the problem of sender anonymity can easily be solved by starting the routing with a short random walk, as extensively analyzed for various anonymous look-up strategies for distributed hash tables (DHTs) (e.g., \cite{evans2011r5n,mclachlan2009scalable}).
In contrast, receiver anonymity is a challenging problem for network embeddings, because the coordinates acting as a node's pseudonym are essential for the routing process and hence for the efficient communication between arbitrary node pairs.
The term resilience is loosely defined. In general, a system is denoted resilience if an action is only slightly impaired by node failures or attacks.
Commonly, a system is judged to be resilient by comparison with others.

We consider two attack goals in our adversary model.
The first goal is to discover the identity of communicating parties, in particular the identity of the designated receiver of a message.
A second goal of the attacker is to block undesired communication using a so-called \emph{black hole attack} \cite{singh2006eclipse}, which could be applied in case an attacker fails to identify specific parties.
During such an attack, an adversary indiscriminately censors communication by first gaining a predominant position in the system and then dropping all received messages.
In addition, attacks on the availability, such as pollution, i.e., denial-of-service attacks by flooding the network with content and traffic, and eclipse attacks, i.e., censoring of specific content, present a thread for any P2P systems.
However, these attacks have been addressed in various publications (e.g., \cite{singh2006eclipse}), which can be applied to our contribution with few modifications. 
Hence, we do not consider them in our evaluation.

As for the attacker's capacities, we assume a local, active, internal, possibly colluding attacker, able to drop and manipulate messages it receives. 
The adversary can control one or several colluding nodes in the network  but is unable to observe the complete topology. In particular, we assume that an adversary cannot be certain that it knows all neighbors of a node, in other words, the complete circle of a user.We assume that this is hard, because it requires the adversary to i) be sure that he knows all contacts from  different social circles of a user, such as family, close friends, and colleagues, and ii) establish connections to all of them.
A global passive attacker is disregarded on the basis that steganographic techniques can be applied to hide the F2F traffic as suggested in e.g. \cite{mohajeri2012skypemorph}.
However, attackers are modeled as polynomial time adversaries, which are given a transcript of all own and public input, as well as all locally observed traffic.
Their computation power is therefore bounded by polynomial time algorithms, which prevents breaking computationally-secure cryptographic primitives.
We assume that an adversary can easily forge an arbitrary number of nodes, so called \emph{Sybils}. 
However, gaining connections and hence influence in a F2F overlay requires establishing real-world trust relationships.
Such \emph{social engineering attacks} are considered to be costly and difficult because they require long-term interaction between a human adversary and an honest participant.
Thus, the number of connections between honest nodes and forged participants can be assumed to be small,
More precisely, we assume that the number is logarithm with the network size, in agreement to previous work \cite{mittal2012x}.

\section{Network Embeddings}
\label{sec:greedyEm}

Our solution builds upon previous work in the area of network embeddings, which assign coordinates to nodes
with the goal  of structuring networks, e.g., for efficient routing in wireless sensor networks or as an alternative to the current IP layer in content-centric networking.
We first introduce some notation, then explain the principal concepts of network embeddings, and conclude by describing specific algorithms. In particular, we detail the PIE embedding \cite{herzen2011scalable}, which we modify in Section \ref{sec:voute-design} to allow for anonymity.

\subsection{Basic Terminology}

In the remainder of the paper, we represent an overlay network by a graph $G=(V,E)$ with \emph{nodes} $V$ and \emph{links} or \emph{edges} $E \subset V \times V$.
Because we require mutual trust for connection establishment in F2F overlays, the network is bidirectional.
We denote the neighbors of $u$ by $N^G_u=\{v \in V: (u,v) \in E\}$.
Therefore, a \emph{Friend-to-friend (F2F) overlay} is an overlay such that the set of links is given by pairs of nodes sharing a real-world trust relationship.
Embedding algorithms heavily rely on \emph{spanning trees}, connected
subgraphs $ST=(V, E^T)$ of $G$ such that $|E^T|=|V|-1$.
In such a (spanning) tree, one node $r$ is designated as the \emph{root} and the position of nodes are described based on their relation to the root.
In particular, the \emph{level} or \emph{depth} of a node $u$ is given by the length of the path from $u$ to $r$.
If $u$ is not the root, the \emph{parent} of $u$ is defined to be a neighbor $v \in N^{ST}_u$ with a shorter path to $r$ than $u$,
whereas the remaining neighbors are $u$'s \emph{children}.
A node without children is called a \emph{leaf}, whereas nodes with children are called \emph{internal nodes}.

\subsection{Concept}

Now, we define the concept of network embeddings and in particular \emph{greedy} network embeddings.
In the following, let $G=(V,E)$ be a network and $(\X,\dist)$ be a metric space with a distance $\dist$.
A network embedding is defined as a function $\id: V \rightarrow \X$ assigning each node a coordinate.
The problem of enabling routing in a connectivity-restricted network has been addressed by the design of \emph{greedy embeddings}. Greedy embeddings \cite{papadimitriou2004conjecture} are coordinate assignments, such that for any source-destination pair $(s,t) \in V \times V$ with $s\neq t$, a neighbor $u$ of $s$ exists such that $\dist(u,t) < \dist(s,t)$. 
We say that $u$ is closer to $t$ than $s$ with regard to $\dist$.
As a consequence, straight-forward greedy routing  is guaranteed to find a route from $s$ to $t$.

Though there exists a multitude of greedy embedding algorithms, they all follow the same four abstract steps: i) Construct a spanning tree $T$, ii) Each internal node in $T$ enumerates its children, iii)  The root receives a predefined coordinate, iv)  Children derive their coordinate from the parent's coordinate and the enumeration index assigned by the parent (e.g. \cite{kleinberg2007geographic, cvetkovski2009hyperbolic, eppstein2009succinct, herzen2011scalable}).
The coordinates are then distributed such that the embedding of the spanning tree is greedy, as specified for the PIE embedding below.
Subsequent to the coordinate assignment, nodes consider all neighbors, including those that are neither parent nor child, for the routing.
So, routing is not restricted to tree edges. We call non-tree edges \emph{shortcuts} because they allow for a faster reduction of the distance and shorter routes than predicted by the distance in the tree. 

In the following, we consider the construction and stabilization costs for such greedy embeddings.
A spanning tree is constructed by i) selecting a root node using a distributed leader election protocol such as \cite{perlman1985algorithm,sirivianos2007non}, and ii) building the tree from the root.
In this manner, it is possible to construct a spanning tree with $\calO(n \log n)$ messages for a graph of diameter $\calO(\log n)$
\cite{perlman1985algorithm}, though integrating protections against nodes aiming to cheat the root selection protocol such as \cite{sirivianos2007non} require a higher cost.
Various embeddings \cite{cvetkovski2009hyperbolic, eppstein2009succinct, herzen2011scalable} are able to react to dynamics without computing the complete embedding whenever the topology changes.
 New nodes join the trees as leaves, requiring only a constant overhead for contacting one of their neighbors to be their parent and receiving a coordinate from said parent.
If any node but the root leaves, only its descendants have to reconnect.
We show that the stabilization overhead then scales linearly with the tree depth rather than linear with the number of participants.

\subsection{Existing Approaches}
Though embedding algorithms generally rely on a spanning tree and assign coordinates according to the tree structure, the nature of the assigned coordinates is highly diverse: Embeddings into hyperbolic space such as \cite{kleinberg2007geographic, cvetkovski2009hyperbolic, eppstein2009succinct} allow embedding in low-dimensional spaces.
However, proposed hyperbolic embeddings are extremely complex and do not scale with regard to the number of bits required for coordinate representation \cite{eppstein2009succinct}.
Custom-metric approaches have been designed to overcome these shortcomings.
The custom-metric embedding PIE \cite{herzen2011scalable} assigns an empty vector as the root coordinate.
Child coordinates are then derived from the parent coordinate by concatenating the parent coordinate with the index assigned to the child by the parent, potentially weighted with the cost of the parent-child edge if such weights are given.
In this manner, a node $s$'s coordinate represents the route from the root to $u$. 
Consequently, the distance $\dist$ is  given by the hop distance of two nodes in the tree.
An example for the PIE embedding in unweighted graphs is displayed on the left side of Figure \ref{fig:em}.
Whereas routing in greedy embeddings is highly efficient in comparison to non-greedy embeddings \cite{hofer2013greedy}, neither anonymity nor resilience has been considered in suitably manner.

\section{Design}
\label{sec:voute-design}

Our main contribution lies in proposing multiple greedy embeddings with anonymous return addresses and a virtual overlay on top of the embeddings.
In the following, we present our system, in particular
\begin{itemize}
\item a spanning tree construction and stabilization algorithm for multiple parallel embeddings, 
\item an embedding algorithm providing efficiency as well as allowing for improved censorship-resistance through a modified distance,    
\item an address generation algorithm $\ano$ enabling receiver anonymity, and
\item a virtual overlay design based on embeddings which allowing balanced content distribution and efficient content retrieval. 
\end{itemize}

\subsection{Tree Construction and Stabilization}
\label{sec:voute-trees}

In this section, we show how we construct and stabilize $\trees$ parallel spanning trees. In the next section, we then describe how to assign coordinates on the basis of these trees.
We want to increase the robustness and censorship-resistance by using multiple trees. In order to ensure that the trees indeed offer different routes, our algorithm encourages nodes to select different parents in each tree if possible.   
Our algorithm design follows similar principles as the provable optimally robust and resilient tree construction algorithm for P2P-based video streaming presented in \cite{brinkmeier2009optimally}.
However, the algorithm assumes that nodes can change their neighbors. Thus, we cannot directly apply the algorithm nor the results.
In the following, we first discuss the tree construction and then the stabilization.


\paragraph{Tree Construction:} We divide the construction of a tree into two phases i) selecting the root, and ii) building the tree starting from the root.
We can apply \cite{perlman1985algorithm} for the root election, which achieves a communication complexity of $\calO\left(n \log n\right)$.
Our own contribution lies in ii) the tree construction after the root node has been chosen. 

We now shortly describe the idea of our algorithm and then the actual algorithm.
A node $u$ that is not the root receives messages from its neighbors when they join a tree and are hence available as parent nodes. 
There are two questions to consider when designing an algorithm governing $u$'s reaction to such messages, called invitations in the following.
First, $u$ has to decide if and when it accepts an invitation. Second, $u$ has to select an invitation in the presence of multiple invitations.  

For the second question, $u$ always prefers invitation from nodes that have been their parent in less trees with the goal of constructing different trees and increasing the overall number of possible routes. Increasing the number of routes allows the use of alternative routes if the request can not be routed along the preferred route due to a failed or malicious node. 
If two neighbors are parents in the same number of trees, $u$ can either select one randomly or prefer the parent closer to the root.
Choosing a random parent reduces the impact of nodes close to the root but is likely to lead to longer routes and thus a lower efficiency.

Coming back to the first question of if and when $u$ accepts invitations, $u$ should always accept an invitation of a neighbor $v$ that is not yet a parent of $u$ in any tree in order choose different parents as often as possible. 
In contrast, if $v$ is already a parent, $u$ might wait for the invitation of a different neighbor.
However, it is unclear if it is possible for all neighbors of $u$ to ever become a parent. For example, a neighbor of degree $1$ is only
a parent if it is the root. 
In order to overcome this dilemma, $u$ periodically probabilistically decides if it should accept $v$'s invitation or wait for another invitation.
So, $u$ eventually accepts an invitation but does provide alternative parents the chance to send an invitation.

Now, we describe the exact steps of the algorithm. 
The algorithm is a round-based invitation protocol for the tree construction.
After a node $u$ is included in the $i$-th tree, $u$ sends invitations $(i,u)$ to all its neighbors inviting them to be its children in tree $i$.
When $u$ receives an information $(j,w)$ for the $j$-th tree from a neighbor $w$, it saves the invitation if it is not yet contained in tree $j$
and otherwise stores it. The invitation can still be used if $u$ has to modify its parent selection later. 
In each round, a node $u$ considers all invitations for trees it is not yet part of, as described in Algorithm \ref{algo:Tconst}.
Let $pc(v)$ be number of trees for which a neighboring node $v$ is a parent of $u$.
If $u$ has received invitations from neighbors $v$ with minimal $pc(v)$ among all neighbors, $u$ accepts one of those invitations (Lines \ref{algo:all1}-\ref{algo:all2}).
In the presence of multiple invitations, we experiment with two selection strategies: i) Choosing a random invitation, and ii) Choosing a random invitation from a node on the lowest level. 
The latter selection scheme requires that the invitations also detail the level of the potential parent node in the tree.
If $u$ does not have an invitation from any node with minimal $pc(v)$, $u$ nevertheless accepts an invitation with probability $q$
in order to guarantee the termination of the tree construction.
If $u$ accepts a parent, it selects a node $v$ that has offered an invitation and has the lowest $pc(v)$ among neighbors with outstanding invitations (Lines \ref{algo:limited1}-\ref{algo:limited2}).
In this manner, we guarantee the convergence of tree construction. 

The acceptance probability $q$ is essential for the diversity and the structure of the trees: For a high $q$, nodes quickly accept invitations leading to trees of a low depth and thus short routes. However, in the presence of an attacker acting as the root of all or most trees, the trees are probably close to identical, resulting in a low censorship-resistance.
A lower acceptance probability $q$ increases the diversity but entails longer routes.
Thus, a low $q$ results in a higher communication complexity and at some point decreases the robustness due to the increased likelihood of encountering failed nodes on a longer route. 
In Section \ref{sec:voute-perfbounds}, we show that the constructed trees are of a logarithmic depth such that  we indeed maintain a routing complexity of $\calO(\log n)$.
Note that Algorithm \ref{algo:Tconst} does not assume that all trees are constructed at the same time. Rather, individual trees can be (re-)constructed while the remaining trees impact the parent choice in the new tree but remain unchanged.

\begin{minipage}[ht]{.9\linewidth}
\begin{algorithm}[H]
\caption{\small constructTreeRound()}
\label{algo:Tconst}
\begin{algorithmic}[1]
{\small
\STATEx  \COMMENT{\textit{Internal state: Set $I$ of invitations, acceptance probability $q$, $pc:N_u \rightarrow \mathbb{N}_0$ number of times neighbor is parent}}
 \STATE $PP \leftarrow \{(i,w) \in I: \forall \mathbf{v \in N_u}: pc(w)\leq pc(v) \}$ \label{algo:all1} 
 \IF{$PP$ is not empty}
  \STATE Select invitation in $PP$ to answer \label{algo:all2}
 \ELSE
 \STATE $r \leftarrow $ uniform random number
 \IF{$r \leq q$}
 \STATE $PQ \leftarrow \{inv=(i,w) \in I: \forall \mathbf{(j,v) \in I}: pc(w)\leq pc(v) \}$ \label{algo:limited1} 
 \STATE Select invitation in $PQ$ to answer \label{algo:limited2}
 \ENDIF
 \ENDIF
} 
\end{algorithmic}
\end{algorithm}
\vspace{0.5em}
\end{minipage}

\paragraph{Stabilization:} Now, we consider the stabilization of the trees when nodes join and leave.
Stabilizing the trees efficiently, i.e., repairing them locally rather than reconstructing the complete tree whenever the topology changes, is essential for efficiency.
Joining nodes can be integrated in a straight-forward manner by connecting to their neighbors as children, again trying to maximize the diversity of the parents. 
For this purpose, nodes record the time, i.e., the round in our abstract time model, they joined the tree. Now, when a new node $u$ joins, it requests its neighbors' coordinates and these timestamps for all trees. 
Based on this information, $u$ can simulate Algorithm \ref{algo:Tconst} locally, ensuring that its expected depth in the tree is unaffected by its delayed join.
When a node departs, all its children have to choose a different parent and inform their descendants of the change.
In order to prevent a complete subtree from being relocated at an increased depth, the descendants may also select a different parent.
The selection of the new parent again follows Algorithm \ref{algo:Tconst} but only locally re-establishes the trees affected by the node departure. 
We show that the stabilization complexity considering any node but the root is linear in the terms of average depth of the node in the trees in Section \ref{sec:voute-perfbounds}. 

We formally prove that the above stabilization algorithm indeed only introduces only logarithmic complexity in Section \ref{sec:voute-perfbounds}. We now present the embedding algorithm, which in agreement with the presented tree construction algorithm, assigns coordinates within subtrees independently of the remaining subtrees to allow for local stabilization.
\subsection{Embedding and Routing}
\label{sec:voute-embedding}

 \begin{figure}
 \centering
 \includegraphics[width=0.7\linewidth]{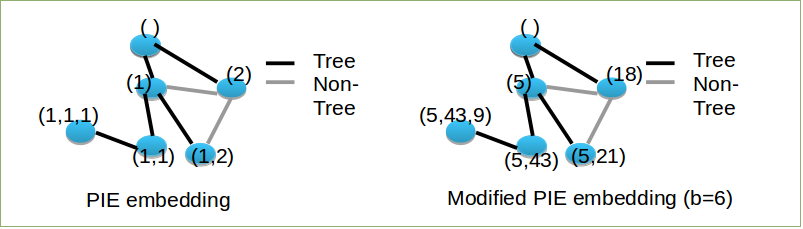}
 \caption[Original PIE vs. modified PIE]{Original PIE and modified PIE coordinates using $b=6$-bit number}
 \label{fig:em}
 \end{figure}

In this section, we show how we assign coordinates in a spanning tree and how to route based on these coordinates.
As we want to prevent an attacker from guessing the coordinate of a receiver, we require a certain degree of in-determinism in the coordinate assignment.
We thus choose a slightly modified version of the unweighted PIE embedding \cite{herzen2011scalable}, which we have introduced in Section \ref{sec:greedyEm}.
Our main modification lies the use of in-deterministic coordinates in order to prevent an adversary from guessing the coordinate and thus undermining the anonymization schemes presented in the next section.
The routing algorithm corresponds to the greedy routing with backtracking. In addition to the tree distance in  \cite{herzen2011scalable}, we also present a second distance preferring nodes with a long common prefix and thus avoiding routes via nodes close to the root whenever possible.
In this manner, we increase robustness and censorship-resistance because the routing algorithm considers alternative routes and the impact of strategically choosing a position close to the root is reduced.
In the following, we subsequently present the embedding algorithm, the distance functions, and the routing with backtracking.

\paragraph{Embedding Algorithm: } Embeddings are performed on each of the $\trees$ trees independently, so that we only consider one embedding $\id$.
Throughout this section, let $b$ be a sufficiently large integer, $PRNG$ a pseudo-random number generator with values in $\mathbb{Z}_2^b$, and $h: \{0,1\}^* \rightarrow H$ a cryptographically secure hash function.
We describe the embedding algorithm, then the distance used for routing, and last, the backtracking procedure, which allows highly resilient routing despite failures.

We now describe the embedding algorithm for one tree. 
The coordinate assignment starts at the root and then spreads successively throughout the tree. 
After a spanning tree has been established, the root $r$ is assigned an empty vector as a coordinate $\id(r)=()$. 
In the next step, each child $v$ of the root generates a random $b$-bit number $a \in \mathbb{Z}^b_2$ such that its coordinate is $\id(v)=(a)$.
Here, our algorithm differs from the PIE embedding because it uses random rather than consecutive numbers, thus preventing an adversary from guessing the coordinate in an efficient manner. 
Subsequently,  nodes in the tree are assigned coordinates by concatenating their parent's coordinate with a random number.
So, upon receiving its parent coordinate $\id(p(v))=(a_1, \ldots, a_{l-1})$, a node $v$ on level $l$ of the tree  obtains its coordinate $\id(v)=(a_1, \ldots, a_{l-1}, a_l)$ by adding a random $b$-bit number $a_l$.
The coordinate space is hence given by all vectors consisting of $b$-bit numbers, i.e., $\X=\{(a_1, \ldots, a_{l-1}, a_l): l \in \mathbb{N}_0, a_i \in \{0,1\}^b\}$.
Figure \ref{fig:em} displays the differences between the original PIE embedding and our variation.

Note that the independent random choice of the  $b$-bit number $a \in \mathbb{Z}^b_2$ might lead to two nodes having the same coordinate.
Thus, $b$ should be chosen such that the chance of equal coordinates should be negligible.
If two children nevertheless select the same coordinate, the parent node should inform one of them to adapt its choice.
Note that allowing the parent to influence the coordinate selection in this manner does not really increase the vulnerability to attacks,
as the parent can achieve at least the same damage by constantly changing its coordinate. Such constant changes can be detected easily, so that nodes should stop selecting such nodes as parents. 
In general, by moving the choice of the last coordinate element from the parent to the child, we automatically reduce the impact of a malicious parent as it can not determine the complete coordinate of the child.

\paragraph{Distances: }
We still need to define distances between coordinates in order to apply greedy routing. 
For this purpose, we consider two distances on $\X$.
Both rely on the common prefix length $cpl(x_1, x_2)$ of two vectors $x_1$ and $x_2$ and the coordinate length $|x_1|$.

First, we consider the tree distance $\delta_{TD}$ from \cite{herzen2011scalable}, which gives the length of  path between the two nodes in the tree, i.e., 
\begin{align}
\label{eq:TD}
\delta_{TD}(x_1,x_2) = |x_1|+|x_2|-2 cpl(x_1, x_2).
\end{align}

Secondly, the common prefix length can be used as the determining factor in the distance function, i.e., for  a constant $\lengthAd$ exceeding the length of all node coordinates in the overlay, we define 
\begin{align}
\label{eq:CPLD}
\delta_{CPL}(x_1,x_2) =
\begin{cases} 
\lengthAd-cpl(x_1,x_2)-\frac{1}{|x_1|+|x_2|+1}, & x_1 \neq x_2 \\
0, & x_1 = x_2 
\end{cases}.
\end{align}
The reason for using the common prefix length rather than the actual tree distance is the latter's preference of routes passing nodes close to the root in the tree. 
In this manner, these nodes on these routes are very influential, so that adversaries can gain a large impact from gaining such a position.
In contrast, $\delta_{CPL}$ prefers possibly longer routes by always forwarding to a node within the same subtree as the destination and avoids central nodes in the tree. An example of the difference between the two distances and the impact on the discovered routes is displayed in Figure \ref{fig:dist}.

\begin{figure}
\centering
\includegraphics[width=0.7\linewidth]{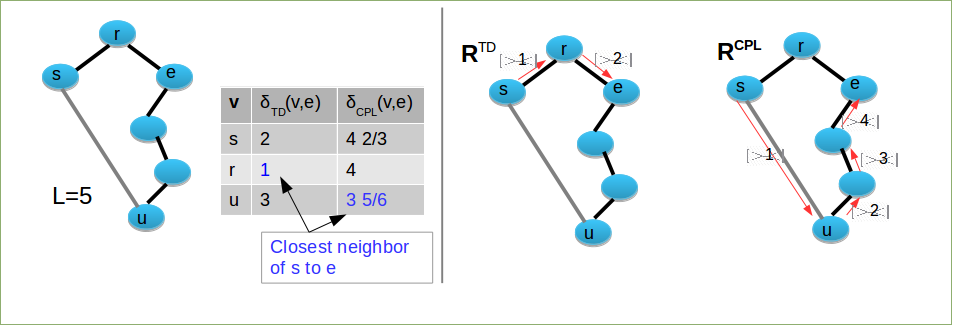}
\caption[Distances in Tree-based Embeddings]{Tree distance (TD) $\delta_{TD}$ and common prefix length based distance $\delta_{CPL}$ when routing from node $s$ to $\e$: $\delta_{CPL}$ prefers nodes in the same subtree as the destination, leading to better censorship-resistance at the price of longer routes. The table gives the distances in the first hop for $s$ and its neighbors $r$ and $u$.}
\label{fig:dist}
\end{figure}

\paragraph{Greedy Routing in Multiple Embeddings: }
We route in $1 \leq \tau \leq \trees$ trees in parallel. 
More precisely,  given a vector of coordinates $(\id_1(\e), \ldots , \id_\trees(\e))$, the sender $s$ selects $\tau$ coordinates and sends a request for each of them. $\s$ can either select $\tau$ embeddings uniformly at random or choose the embeddings so that 
the distance of the neighbor $v_i$ with the closest coordinate to $\id_i(\e)$ is minimal. 
The latter choice might result in shorter routes due to the low distance in the embedding. 

The routing processes in each embedding independently.
Nodes forward the request to the neighbor with the closest coordinate in the respective embedding.
Thus, in order for the nodes on the route to forward the request correctly, 
the request has to contain both the coordinate $\id_i(v)$ and the index $i$ of the embedding.
In practice, we can achieve a performance gain by including multiple coordinates and embedding indices in one message if the next hop in two or more  embeddings are identical.
For now, we assume that one message is sent for each embedding for simplicity.

We optionally increase the robustness and censorship-resistance of the routing algorithm by allowing backtracking if the routing gets stuck in a local minimum of the distance function due to failures or intentional refusal to forward a request. 
For this purpose, all nodes remember their predecessor  on the routing path as well as the neighbors they have forwarded the request to. 
If all neighbors closer to the target have been considered and have been unable to deliver the request, the node reroutes the request to its predecessor for finding an alternative path.
The routing is thus only considered to be failed if the request returns to its source $\s$ and cannot be forwarded to any other neighbor.
In this manner, all \emph{greedy paths}, i.e., all paths with a monotonously decreasing distance to the target, are found.

\begin{minipage}[ht]{.9\linewidth}
\begin{algorithm}[H]
\caption{\small route()}
\label{algo:route}
\begin{algorithmic}[1]
{\small
\STATEx  \COMMENT{\textit{Input: current node $u$, message $msg$ from node $w$, tree index $i$, target coordinate $x_i$}}
\STATEx  \COMMENT{\textit{Internal state: set $S(msg)$ of nodes $u$ forwarded $mess$ to, predecessor $pred(msg)$, distance $\delta$}}
\IF{$id_i(u) == x_i$}
\STATE  Routing succeeds \label{algo:success}
\ELSE
\commIt{Store predecessor unless backtracking}
\IF{not $S(msg)$ contains $w$}
\STATE $pred(msg) \leftarrow w$ \label{algo:pred}
\ENDIF 
\commIt{Determine closest neighbors}
\STATE $C \leftarrow argmin_{v \in N_u\setminus S(msg)} \delta(id_i(v), x_i)$ \label{algo:g1}
\STATE $next \leftarrow $ random element in C \label{algo:g2}
\IF{$\delta(id_i(v), x_i) > \delta(id_i(next), x_i)$} 
\STATE Forward $msg$ to $next$ \commItLine{Forward if improvement} \label{algo:forward}
\ELSE
\IF{$pred(msg)$ is set}
\STATE  Forward $msg$ to $pred(msg)$ \commItLine{Backtrack}\label{algo:backtrack}
\ELSE
\STATE Routing failed \label{algo:fail}
\ENDIF
\ENDIF
\ENDIF 
} 
\end{algorithmic}
\end{algorithm}
\vspace{0.5em}
\end{minipage}

Algorithm \ref{algo:route} gives the pseudo code  describing one step of  the routing algorithm, including the backtracking procedure. 
When receiving a message $msg$, the node $u$ first checks if it is the receiver of $msg$, thus successfully terminating the routing (Line \ref{algo:success}).
If $u$ is not the receiver, it determines if the routing is currently in the backtracking phase by checking if  $u$ has previously forwarded $msg$ to the sender $w$. 
Otherwise, it stores the sender of $msg$ as a predecessor for potential later backtracking (Line \ref{algo:pred}).
In the manner of greedy routing, $u$ selects the closest neighbor to the target coordinate. In the presence of several closest neighbors, 
$u$ picks one of them uniformly at random (Lines \ref{algo:g1}-\ref{algo:g2}).
Note that in the presence of failures, the embedding can lose its greediness.
Hence, to avoid loops, $u$ only forwards the request to that neighbor if it is indeed closer (Line \ref{algo:forward}).
Otherwise, $u$ contacts its predecessor (Line \ref{algo:backtrack}) or forfeits the routing if no such predecessor exists (Line \ref{algo:fail}), i.e., if $u$ is the source of the request.

This completes the description of the routing and stabilization functionalities. However, up to now, we used identifying coordinates rather than anonymous addresses. 
\subsection{Anonymous Return Addresses}
\label{sec:voute-returnAdd}

In this section, we introduce our address generation algorithm  for generating anonymous return addresses but do not reveal the receiver of the request. 
For this reason, we call the generated addresses \emph{route preserving (RP) return addresses}. 
Based on these return addresses, we specify two routing algorithms $\routeTD$ and $\routeRAPCPL$ for routing a request containing a return address. 
The return addresses allow a node to determine the common prefix length of their neighbor's coordinates and the receiver coordinate, which allows the node to determine the closest neighbor. Hence, $\routeTD$ and $\routeRAPCPL$ correspond to Algorithm \ref{algo:route} for the two distance function $\delta_{TD}$ and $\delta_{CPL}$ when using return addresses rather than a receiver coordinates. 
After describing the algorithm, we show that the return addresses indeed preserve routes. 


\paragraph{Return Address Generation:} Return addresses are generated in three steps:
\begin{enumerate}
\item Padding the coordinate 
\item Applying a hash cascade to obtain the return address 
\item Adding a MAC
\end{enumerate}
Algorithm \ref{algo:rap} displays the pseudo code of the above steps.

\begin{minipage}[ht]{.9\linewidth}
\begin{algorithm}[H]
\caption{\small generateRP()}
\label{algo:rap}
\begin{algorithmic}[1]
{\small
\commIt{Input: coordinate $x=(a_1, \ldots , a_l)$, seed $s,s_{pad}$}
\commIt{Internal State: key $\keymac(v)$, $h$, $PRNG$}
\STATE $\tilde{k} \leftarrow PRNG(s)$ 
\STATE $d_1 \leftarrow h(\tilde{k} \xor a_1)$
\FOR{$j=2\ldots \lengthAd$}
\IF{$j \leq l$}
\STATE $a'_j \leftarrow a_j$
\ELSE
\STATE $a'_j \leftarrow PNRG(s_{pad}+ j)$ \commItLine{Padding}
\ENDIF
\STATE $d_j \leftarrow h(d_{j-1} \xor a'_j)$ \commItLine{Hash cascade}
\ENDFOR
}
\STATE \small{$mac \leftarrow h(\keymac(v)||d_1||d_2 ||\ldots || d_\lengthAd)$ \commItLine{MAC}}
\STATE \small{Publish $y=(d_1, \ldots , d_\lengthAd), \tilde{k}, mac$} 
\end{algorithmic}
\end{algorithm}
\vspace{0.5em}
\end{minipage}

The first step of the return address generation prevents an adversary from identifying coordinates based on their length.
A node  $v$ pads its coordinate $x=(a_1,\ldots , a_l)$ by adding random elements $a'_{l+1}, \ldots ,a'_\lengthAd$. 
More precisely, $v$ selects a seed $s_{pad}$ for the pseudo-random number generator $PRNG$ and obtains the padded coordinate 
$x'=(a'_1, \ldots , a'_l, a'_{l+1}, \ldots, a'_\lengthAd)$ with
\begin{align*}
a'_j = \begin{cases}
a_j,& j \leq l \\
PRNG(s_{pad} \xor j), & j > l
\end{cases}.
\end{align*}
In order to ensure that the closest node coordinate to $x'$ is indeed $x$, $v$ recomputes the padding with a different seed
if $a'_{l+1}$ is equal to the $l+1$-th element of a child's coordinate \footnote{We exclude this step in Algorithm \ref{algo:rap} for  increased readability}.
Afterwards, $v$ chooses a different seed $s$ for the construction of the actual return address and generates 
$\tilde{k}= PRNG(s) \in \keysPart = \mathbb{Z}^b_2$.
$v$ then executes the local function $hc: \X \rightarrow \Y=H^\lengthAd$ in order to obtain a vector $y$ with elements in $H$.
The $i$-th element of $y=(d_1, \ldots , d_\lengthAd)$ is given by 
 \begin{align}
 \label{eq:cash}
 d_j =
\begin{cases}
h(\tilde{k} \xor a'_1), & j=1 \\
h(d_{j-1} \xor a'_j), & j =2 \ldots \lengthAd
\end{cases}.
\end{align}
We call the pair $(y, \tilde{k})$ a \emph{return address}, which can be used to find a route to the node with coordinate $x$.
Before publishing the return address, $v$ adds a MAC  
$mac(y_i, \keymac(v))=$ $h(d_1 || \ldots d_\lengthAd || \keymac(v))$ for a private key $\keymac(v)$ to prevent malicious nodes from faking return addresses and gaining information from potential replies.
Last, $v$ publishes the return address $(y, \tilde{k})$ and the MAC.


\paragraph{Routing Algorithms:} Now, we determine diversity measures $\distIndex{RP-TD}:\X \times \Y \rightarrow \mathbb{R}_{+}$ and $\distIndex{RP-CPL}:\X \times \Y \rightarrow \mathbb{R}_{+}$ in order to compare coordinates $x$ and $y$ with regard to $\delta_{TD}$ and  $\delta_{CPL}$.
The diversity measure then assumes the role of the distance $\delta$ in Algorithm \ref{algo:route}.
\footnote{Note that a diversity measure is not a distance because it i) is defined for two potentially distinct sets $\X$ and $\Y$, and ii) is not symmetric.}

In order to define a sensible diversity measure, note that for any coordinate $c$ and return address $y$ for coordinate $x$, we
have  $cpl(x,c)=cpl(y, hc(c,\tilde{k}))$.
We thus can define the diversity measure in terms of the common prefix length in the same manner as the distance. 
More precisely, for $* \in \{TD, CPL\}$, the diversity $\distIndex{RP-*}(y, \tilde{k}, c)$ for of a coordinate $c$ to the return address $y$ is
\begin{align}
\label{eq:distRap}
\distIndex{RP-*} (y, \tilde{k}, c) = \distIndex{*}(y_i, hc(c,\tilde{k})).
\end{align}
In practice, $u$ can increase the efficiency of the computation by only determining $hc(c,\tilde{k})$ up to the first element in which it disagrees with $y$.
Thus, we now have two possible realizations of the routing algorithm $\routeN$, namely $\routeTD$ and $\routeRAPCPL$.
Given the RP return address $(y, \tilde{k})$ of the destination $\e$, $\routeTD$ and $\routeRAPCPL$ forward the message to the neighbor $v$  with the lowest diversity measure $\distIndex{RP-TD} (y, \tilde{k}, \id(v))$ and $\distIndex{RP-CPL}(y, \tilde{k}, \id(v))$, respectively.

\paragraph{Proving Route Preservation:} We now prove formally that the above return addresses preserve routes. For this purpose, we first define the notion of preserving a property of a coordinates. 
Note that we
 \begin{definition}
\label{def:preserving}
Let $\Q_u:\mathcal{P}(\X) \times \X \rightarrow \mathcal{P}(\X)$ be a local function of node $u$ in a graph $G=(V,E)$
Given a set $C \subset \X_V= \{v \in V: \id(v)\}$ of node coordinates  and a target coordinate $x \in \X$, $\Q_u$ returns a subset $C' \subset C$.
A return address $(y, \tilde{k})$ for a coordinate $x$ is said to \emph{preserve $\Q$} if for all $u \in V$, there exists a function $\Q'_u: \mathcal{P}(\X) \times \Y \times \keysPart \rightarrow \mathcal{P}(\X)$ such that for all $C \subset \X$
 \begin{align*}
 \Q'_u(C,y, \tilde{k}) = \Q_u(C,x).
 \end{align*}
\end{definition}
The notion of \emph{route preserving (RP)} return addresses now follows if we choose the function $Q_u$ to return the neighbors with the closest coordinates to $\pre(y, \tilde{k})$.
\begin{definition}
\label{def:rap}
Let 
\begin{align}
\label{eq:ra}
\begin{split}
&ra_u \colon \mathcal{P}(\X) \times \X \rightarrow \mathcal{P}(\X), \\
&ra_u(C,x) = argmin_{c \in C}\{\delta(c,x)\} 
\end{split}
\end{align}
determine the closest coordinates in a set $C$ to a coordinate $x$.
A return address $(y, \tilde{k})$ is called route preserving (RP) (with regard to $\delta$) if it preserves $ra$. 
\end{definition} 

Based Definition \ref{def:rap}, we can now show that Algorithm \ref{algo:rap} generates RP return addresses.
\begin{theorem}
\label{thm:rpRA}
Algorithm \ref{algo:rap} generates RP return addresses with regard to the distances $\delta_{TD}$ and $\delta_{CPL}$.
\end{theorem}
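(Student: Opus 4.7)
The plan is to reduce the theorem to showing that a node $u$, given only the return address $(y, \tilde{k})$ and the coordinate $c$ of any neighbor, can recover both $cpl(x, c)$ and $|c|$, and that these two quantities fully determine the ordering of neighbors under both $\delta_{TD}$ and $\delta_{CPL}$. Since the node trivially knows $|c|$ from $c$, the work lies in establishing the key identity $cpl(y, hc(c, \tilde{k})) = cpl(x, c)$ already advertised in the preceding paragraph of the paper.

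I would split this equation into two sub-claims. First, $cpl(x', c) = cpl(x, c)$, where $x' = (a'_1, \ldots, a'_\lengthAd)$ is the padded coordinate. Since $x$ and $x'$ agree on the first $l = |x|$ positions, it suffices to rule out the case where $c$ has prefix $(a_1, \ldots, a_l)$ yet $c_{l+1} = a'_{l+1}$. By construction of the PIE-style embedding, any such $c$ must be the coordinate of a descendant of $x$'s owner in the tree, hence $c_{l+1}$ coincides with the $(l+1)$-th element of some child's coordinate; the padding re-selection step in Algorithm \ref{algo:rap} explicitly guarantees $a'_{l+1}$ avoids all such elements.

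Second, $cpl(y, hc(c, \tilde{k})) = cpl(x', c)$, by induction on the cascade index. If $x'_j = c_j$ for all $j \leq k$, the two cascades produce identical values through position $k$. At the first position $k+1$ where $x'_{k+1} \neq c_{k+1}$ the hash inputs $d_k \oplus x'_{k+1}$ and $d_k \oplus c_{k+1}$ differ, so a collision of $h$ would be required for the outputs to coincide. Collision-resistance of $h$, together with the polynomial-time adversary model assumed in Section \ref{sec:req}, renders this event negligible. This induction step is the most delicate part of the plan, since it is where the cryptographic assumption enters and yields an "except with negligible probability" statement rather than a deterministic identity.

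Finally, I combine these to conclude route preservation for each distance. For $\delta_{TD}$, subtracting cancels $|x|$: $\delta_{TD}(x, c_1) - \delta_{TD}(x, c_2) = (|c_1| - |c_2|) - 2(cpl(x, c_1) - cpl(x, c_2))$, so only $cpl$ and $|c_i|$ enter the comparison. For $\delta_{CPL}$, the lex-order on $(-cpl(x, c), |c|)$ determines the ranking, because a nonzero $cpl$ difference is an integer and hence dominates the tie-breaker, which has absolute value below one, and when common prefix lengths coincide the tie-breaker $\frac{1}{|x|+|c|+1}$ favors the smaller $|c|$ irrespective of $|x|$. Defining $Q'_u(C, y, \tilde{k})$ to return the argmin under the resulting $|x|$-free comparison function thus yields $Q'_u(C, y, \tilde{k}) = ra_u(C, x)$ in the sense of Definition \ref{def:preserving}, establishing that $(y, \tilde{k})$ is route-preserving with respect to both $\delta_{TD}$ and $\delta_{CPL}$.
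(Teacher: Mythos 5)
Your proposal is correct and follows the same two-part structure as the paper's proof: first establish that the common prefix length is recoverable from the return address, then show that both distances are determined by $cpl(x,c)$ and $|c|$ alone. The notable difference is in rigor at the first step. The paper simply asserts, in the paragraph preceding the theorem, that $cpl(x,c)=cpl(y,hc(c,\tilde{k}))$ holds for any $c$, and then restricts attention to $c$ with $cpl(\pre(y,\tilde{k}),c) \leq |x|$ via the padding-re-selection argument; it never examines the hash cascade itself. You decompose that assertion into two sub-claims --- $cpl(x',c)=cpl(x,c)$ via the padding re-selection, and $cpl(y,hc(c,\tilde{k}))=cpl(x',c)$ via induction on the cascade index together with collision-resistance of $h$ --- and you correctly observe that the second sub-claim is only a negligible-probability statement under the polynomial-time adversary model, not a deterministic identity. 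This is a genuine gap the paper glosses over, and your treatment is cleaner and more honest about where the cryptographic assumption enters. For the second step, your lex-order observation on $(-cpl(x,c),\,|c|)$ is a tidier packaging of the paper's explicit chain of equivalences for $\delta_{CPL}$, while your $\delta_{TD}$ argument (the $|x|$ terms cancel in the difference) mirrors the paper's constant-shift computation. One minor imprecision: you attribute the padding re-selection to Algorithm \ref{algo:rap}, but the paper explicitly excludes that step from the pseudocode for readability and describes it only in the surrounding text; the substance of your argument is unaffected.
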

\begin{proof}
In order to show that $(y, \tilde{k})$ preserves routes, we derive the relation between the diversity measures $\distIndex{RP-TD}$ and $\distIndex{RP-CPL}$, defined in Eq. \ref{eq:distRap}, and the corresponding distances $\delta_{TD}$ and $\delta_{CPL}$, defined in Eq. \ref{eq:TD} and Eq. \ref{eq:CPLD}, respectively.

Let $\pre(y, \tilde{k})$ denote the padded coordinate used to generate $y$, and let $x$ be the coordinate without padding.
In the following, we relate the distance of $x$ and a coordinate $c$ to the diversity measure of $(y, \tilde{k})$ and $c$.
We can assume that $cpl(\pre(y, \tilde{k}),c)=cpl(x,c) \leq |x|$, i.e., the common prefix length of the padded coordinate and $c$ is at most equal to the length of the original coordinate $x$.
A node with coordinate $c$ with $cpl(\pre(y, \tilde{k}),c) > |x|$ cannot exist in a valid embedding. 
More precisely, our embeddings algorithm ensures that coordinates are unique and a node $v$ ensures that the first element of the padding does not corresponds to the $|\id(v)|+1$-th element of a descendant's coordinate.  Thus, the coordinate $x$ is the unique closest coordinate of a node to the padded coordinate. 
Thus, we can indeed limit our evaluation to coordinates $c$ with  $cpl(\pre(y, \tilde{k}),c) \leq |x|$.

We start by considering the tree distance $\delta_{TD}$. By Eq. \ref{eq:distRap}, we have
\begin{align*}
\distIndex{RP-TD}(y, \tilde{k},c) &= \lengthAd + |c| -2 cpl(\pre(y, \tilde{k}),c) \\
&= |x| + |c|-2 cpl(x,c) + (\lengthAd-|x|) \\
&=\delta_{TD}(x,c) + (\lengthAd-|x|).
\end{align*}
Hence, diversity measure and distance only differ by a constant independent of $c$.
Thus, any forwarding node can determine the closest coordinates to the destination in its neighborhood and thus
Algorithm \ref{algo:rap} generates RP return addresses with regard to $\delta_{TD}$.
 
For the distance $\delta_{CPL}$, we consider two coordinates $c_1$ and $c_2$ with $cpl(\pre(y, \tilde{k}),c_i)=cpl(x,c_i)$ for $i=1,2$. We show that i) $\delta_{CPL}(x,c_1)=\delta_{CPL}(x,c_2)$
iff $\delta_{RP-CPL}(y, \tilde{k},c_1)=\delta_{RP-CPL}(y, \tilde{k},c_2)$ and ii) $\delta_{CPL}(x,c_1)<\delta_{CPL}(x,c_2)$
iff $\delta_{RP-CPL}(y, \tilde{k},c_1)<\delta_{RP-CPL}(y, \tilde{k},c_2)$ . 
Thus, the return address $(y, \tilde{k})$ is RP as the comparison of two coordinates yields the same order when using the return address as for the original coordinate. 
For i) note that by Eq. \ref{eq:CPLD} $\delta_{CPL}(x,c_1)=\delta_{CPL}(x,c_2)$ implies that $cpl(x,c_1)=cpl(x,c_2)$ and $|c_1|=|c_2|$. Because
$cpl(\pre(y, \tilde{k}),c_i)=cpl(x,c_i)$, we have $\delta_{RP-CPL}(y, \tilde{k},c_1)=\delta_{RP-CPL}(y, \tilde{k},c_2)$.
The converse holds analogously by Eq. \ref{eq:distRap}. 
If ii) $\delta_{CPL}(x,c_1)<\delta_{CPL}(x,c_2)$, then Eq. \ref{eq:CPLD} implies that either $cpl(x,c_1)> cpl(x,c_2)$ or $cpl(x,c_1)= cpl(x,c_2)$ and $|c_1|<|c_2|$.
In the first case, the claim follows as $cpl(\pre(y, \tilde{k}),c_i)=cpl(x,c_i)$ and $\delta_{CPL}$ and $\delta_{RP-CPL}$ both prefer coordinates with a longer common prefix length.
For the second case, the claim follows under the assumptions $cpl(x,c_1)= cpl(x,c_2)$ and $cpl(\pre(y, \tilde{k}),c_i)=cpl(x,c_i)$, because
\begin{align*}
 & \delta_{CPL}(x,c_1)<\delta_{CPL}(x,c_2)\\
 \iff & \lengthAd - cpl(x,c_1) -\frac{1}{|x|+|c_1|+1} < \lengthAd - cpl(x,c_2)-\frac{1}{|x|+|c_2|+1} \\
\iff &-\frac{1}{|x|+|c_1|+1} < -\frac{1}{|x|+|c_2|+1} \\
\iff &  |x| + |c_1|+1 < |x| + |c_2| +1 \\
\iff & \lengthAd + |c_1|+1 < \lengthAd + |c_2| +1  \\
\iff &-\frac{1}{\lengthAd+|c_1|+1} < -\frac{1}{\lengthAd+|c_2|+1} \\
\iff &\delta_{CPL}(\pre(y, \tilde{k}),c_1)<\delta_{CPL}(\pre(y, \tilde{k}),c_2) \\
\iff & \delta_{RP-CPL}(y, \tilde{k},c_1)<\delta_{RP-CPL}(y, \tilde{k},c_2)
\end{align*}
Hence for both cases i) and ii), Algorithm \ref{algo:rap} generates RP return addresses with regard to $\delta_{CPL}$.
\end{proof}

Up to now, we have only considered route preserving return addresses generated by padding coordinates and applying a hash cascade. 
Optionally, an additionally layer of symmetric encryption can be added, preventing a node $v$ from deriving the actual length of the common prefix. Rather, $v$ can only determine if a neighbor is closer to the destination than $v$ itself.
However, we show the same degree of anonymity for for both algorithms, so that the additional layer does not result in a provably higher level of anonymity. 
Furthermore, the additional layer reduces the efficiency as nodes select one closer neighbor at random rather than the closest neighbor.
For this reason, the advantage of the additional layer is limited, so that we focus on RP return addresses here and defer the further obfuscation of coordinates to the appendix. 

We prove that Algorithm \ref{algo:rap} indeed enables receiver anonymity in Section \ref{sec:voute-eval-ano}.

\subsection{Content Storage}
\label{sec:voute-content}
In order to store content, we use a distributed hash table (DHT).
As nodes can not communicate directly, they store tree addresses in their routing tables and leverage the tree routing.
In this manner, we do not require maintenance-intensive tunnels like \cite{mittal2012x} and  \cite{vasserman2009membership}.
Note that we only sketch the solution for content storage and retrieval because our focus lies in improving the quality of the greedy embeddings for messaging between nodes. 
In the following, we first present the idea of our design and then a realization based upon a recursive Kademlia.

\paragraph{General Design: }
Nodes establish a DHT by maintaining a routing table of (virtual) overlay connections. The routing table contains entries
correspond to a DHT coordinate and corresponding return addresses. 
Nodes communicate with their virtual neighbors by sending requests in any of the $\trees$ embedding. 

New routing table entries are added by routing for a suitable virtual overlay key, as done in \cite{mittal2012x} for the tunnel discovery. 
However, after the routing terminates, the discovered nodes send back their return addresses rather than taking the routing path as a new tunnel.
In this manner, the length of routes between virtual overlay neighbors only depends on the trees and does not increase over time.
The exact nature of the neighbor discovery, the routing algorithm $\find$, and the stabilization of the virtual overlay depend on the  specifications of the DHT. 

\paragraph{Kademlia: }
In our evaluation in Sections \ref{sec:voute-eval-effi} and \ref{sec:voute-eval-resi}, we utilize a highly resilient recursive Kademlia \cite{heep2010r}.
In Kademlia, a node selects a \emph{Kademlia identifier} $ID(v)$ uniformly at random in the form of a 160-bit number.  
The distance between identifiers is equal to their XOR.
Nodes maintain many redundant (virtual) overlay connections to increase the resilience. 
More precisely, each node $v$ keeps a \emph{routing table} of $k$-buckets. The $j$-th bucket contains up to $k$ addresses of nodes $u$ so that the common prefix length of $ID(v)$ and $ID(u)$ is $j$.
Maintaining more than $1$ neighbor per common prefix length increases the robustness to failures and possibly even to attacks due to the existence of alternative connections. 

Based on such routing tables, efficient and robust content discovery is possible.  
Files are indexed by keys corresponding to the hash of their content, i.e., the algorithm $\anoC$ for the generation of file addresses is a hash function. A node $u$ requesting a file with key $f$ looks up the closest nodes  $v_1, \ldots, v_\alpha$ to $f$ in its routing table in terms of virtual overlay coordinates. 
Then, $u$ routes for each $v_i$ in the $\tau$ trees. Upon receiving one of the messages, $v_i$ returns $f$ via the same route if in possession of $f$.
Otherwise, $v$ forwards the message to the overlay neighbor closest to $f$, again using tree routing, and returns an acknowledgement message to $u$.
If a node on the route has already received the message via a parallel query, it returns a backtrack message such that the predecessor can contact a different node. 
Similarly, if a node does not receive an acknowledgement from its overlay neighbor in time, it selects an alternative node from its routing table if virtual neighbors closer to $f$ than $u$ exist.

Similarly, stabilization is realized in the same reactive manner as in the original Kademlia. Whenever a node successfully sends a message to an overlay neighbor, this neighbor returns an acknowledgement containing updated return addresses if any coordinates were changed.
If a node in the routing table cannot be contacted, the node removes the neighbor from the routing table. 
Depending on the implementation, it initializes a new neighbor discovery request for the prefix.
In addition, suitable neighbors encountered during routing are added to the routing table.

\paragraph{}
We have now presented the essential components of our design. In the following, we evaluate our design with regard to our requirements. The different layers of our system are displayed in Figure \ref{fig:layers}

  \begin{figure}
  \centering
  \includegraphics[height=7cm]{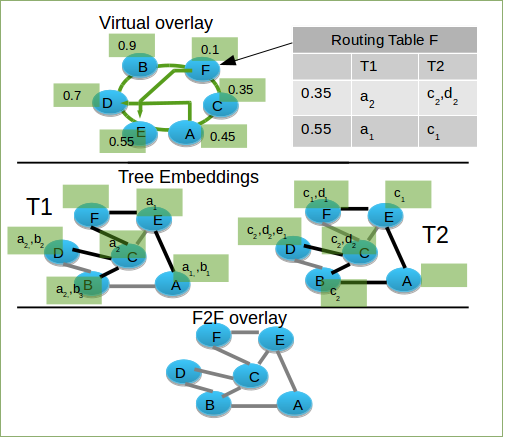}
  \caption[Layers of VOUTE]{Layers of VOUTE: 1) F2F overlay as restricted topology, 2) Tree embeddings $T1$ and $T2$ offer addressing for messaging, 3) Virtual overlays with tree addresses offer content sharing DHT routing based on tree addresses}
  \label{fig:layers}
  \end{figure}

\section{Efficiency and Scalability}
\label{sec:voute-eval-effi}

In this section, we analyze the efficiency of our scheme with regard to routing complexity, stabilization complexity, and their evolution over time.

\subsection{Theoretical Analysis}
\label{sec:voute-perfbounds}

In the first part of this section, we obtain upper bounds on the expected routing length of the routing algorithms $\routeTD$ and $\routeRAPCPL$. The desired upper bound on the routing complexity follows by multiplying this bound for routing in one tree with $\tau$, the number of trees  used for parallel routing.
Afterwards, we consider the stabilization complexity $CS^\stab$ of the stabilization algorithm $\stab$ consisting of i) the local reconstruction of the trees according to Algorithm \ref{algo:Tconst} and ii) the assignment of new coordinates for the nodes affected by a change topology using the modified PIE embedding. 

\paragraph{Routing:} We consider both messaging between nodes as well as content discovery in the DHT. 

\begin{theorem}
\label{thm:PIE-routing}
Let $\id$ be a modified PIE embedding on a spanning tree of $G$ generated by Algorithm \ref{algo:Tconst} with parameters $\trees$ and $q$.
Furthermore, assume that the diameter of $G$ is $diam(G)=\calO(\log n)$.
The expected routing length  of Algorithm \ref{algo:route} is at most 
\begin{align}
\label{eq:R-TD}
\E(R^{TD}) = \calO\left(\frac{\trees}{q}\log n\right)
\end{align}
for the routing algorithm $\routeTD$, and
\begin{align}
\label{eq:R-CPL}
\E(R^{CPL}) = \calO\left(\left(\frac{\trees}{q}\right)^2 \log n\right)
\end{align}
 for $\routeRAPCPL$.
\end{theorem}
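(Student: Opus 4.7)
My plan splits the analysis into two pieces: (i) bound the expected depth $D$ of a single constructed tree, and (ii) bound the greedy-routing length inside one tree in terms of $D$, for each of the two distances $\delta_{TD}$ and $\delta_{CPL}$. The theorem's stated complexities then arise by multiplying the per-tree route length by $\tau \le \trees$, the number of parallel embeddings used for routing.

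For (i) I would analyze Algorithm \ref{algo:Tconst} round by round. The key observation is that whenever a node $u$ has at least one outstanding invitation, it accepts in that round with probability at least $q$: either $PP$ is nonempty and acceptance is certain, or the else-branch fires and acceptance happens with probability $q$. Hence, from the round in which $u$'s first invitation arrives, the expected wait to acceptance is at most $1/q$. A straightforward induction on BFS-distance $d$ from the root then yields that the expected round at which a node at BFS level $d$ joins the tree, and its depth in the tree, are both $\calO(d/q)$. Combined with $diam(G)=\calO(\log n)$, this gives $\E[D]=\calO(\tfrac{1}{q}\log n)$ per tree, uniformly over all $\trees$ trees: the diversification rule only refines which minimal-$pc$ neighbor is preferred, and the same acceptance-probability lower bound $q$ continues to apply.

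For (ii) with $\delta_{TD}$, the tree distance equals the hop distance in the tree, so greedy routing along tree edges succeeds in at most $2D$ hops, while shortcuts and tie-breaking can only reduce this length and no backtracking is triggered in the failure-free setting. Summed over $\tau \le \trees$ parallel trees, this yields $\E[R^{TD}]=\calO(\tfrac{\trees}{q}\log n)$. For $\delta_{CPL}$ the bound is more delicate, because the distance prefers longer common prefixes with the target over genuinely shorter tree paths, so the request can wander through a subtree at a fixed common-prefix length before advancing the prefix. I would charge hops to the at-most-$D$ prefix-length transitions, and argue by an amortized charging that the hops spent between two consecutive prefix increases are bounded on average by $\calO(\trees/q)$ rather than by $D$, exploiting the (approximate) independence of $pc$ counts across different subtrees enforced by Algorithm \ref{algo:Tconst}. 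Aggregating over the $\calO(\log n)$ prefix levels and the $\tau\le\trees$ parallel trees then gives $\E[R^{CPL}]=\calO((\tfrac{\trees}{q})^2\log n)$.

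The main obstacle is exactly this $\delta_{CPL}$ amortization: the naive per-level bound is $D$ hops per prefix increase, which already yields $\calO(D^2)$ per tree and carries an extra $\log n$ factor after substitution. Removing that factor is the delicate step, and I expect it to require a careful charging scheme in which hops spent exploring a subtree at a fixed prefix length are paid for by the structural depth bound of that subtree rather than of the whole tree. The tree-depth argument from step (i), applied to subtrees, is the natural lever for this amortization and the part of the proof I expect to require the most care.
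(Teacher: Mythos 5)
There is a genuine gap in step (i). You claim the expected depth per tree is $\calO\bigl(\frac{1}{q}\log n\bigr)$, arguing that once a node $u$ has an outstanding invitation it accepts within expected $1/q$ rounds. But accepting \emph{an} invitation is not the same as accepting one for the tree $T$ you are analyzing: Algorithm~\ref{algo:Tconst} processes at most one invitation per round, and the accepted invitation may be for any of the $\trees$ trees. In the worst case $u$ accepts invitations for the other $\trees-1$ trees first, so the wait for tree $T$ is bounded by a sum of $\trees$ geometric$(q)$ variables, giving expected depth $\calO\bigl(\frac{\trees}{q}\log n\bigr)$ per tree, not $\calO\bigl(\frac{1}{q}\log n\bigr)$. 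This is precisely what Lemma~\ref{lem:tconst} establishes. Your remark that ``the diversification rule only refines which minimal-$pc$ neighbor is preferred, and the same acceptance-probability lower bound $q$ continues to apply'' is where the argument fails: the diversification rule determines \emph{which tree's} invitation is answered, not merely which neighbor, and is the source of the extra $\trees$ factor.

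As a consequence, your final bound on $\E[R^{TD}]$ matches the theorem only by an offsetting error: you multiply a per-tree depth that is too small by a factor of $\trees$ by the number $\tau \le \trees$ of parallel trees, whereas the paper's $\E(R^{TD})$ is the per-tree routing length and the factor $\frac{\trees}{q}$ comes entirely from the corrected depth bound (see the discussion preceding Lemma~\ref{lem:tconst}: the total message count is obtained by multiplying the stated bound by $\tau$). For $\delta_{CPL}$ you correctly identify the shape of the argument --- prefix-length increases times detour per increase --- but explicitly leave the per-increase amortization unresolved. That step does not in fact require a delicate charging scheme: the paper bounds the detour when forwarding from $v_i$ to $v_{i+1}$ by the difference in their levels minus one, and since $v_i,v_{i+1}$ are graph neighbors their BFS distances to the root differ by at most one, so Lemma~\ref{lem:tconst} already gives an expected per-increase detour of $\frac{\trees}{q}$; combining this with the $\E(L_T(\e))$ bound on the number of increases yields Eq.~\ref{eq:R-CPL} directly.
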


For the proof, we first show  Lemma \ref{lem:tconst}, which bounds the expected level of a node in trees constructed
by  Algorithm \ref{algo:Tconst}.
More precisely, we prove that the expected level of a node in any tree constructed by Algorithm \ref{algo:Tconst} is increased by at most  a constant factor in comparison to a breath-first-search.

\begin{lemma}
\label{lem:tconst}
Let $T$ be any of the $\trees$ trees constructed by Algorithm \ref{algo:Tconst} and $r$ the root of $T$.
Furthermore, denote by $sp_r(v)$ the length of the shortest path from $v$ to $r$, and let $L_T(v)$ be the level of $v$ in $T$. Then the expected value of $L_T(v)$ is bound by
\begin{align}
\label{eq:level}
\E(L_T(v)) \leq sp_r(v)\cdot \left(1+\frac{\trees}{q}\right).
\end{align}
\end{lemma}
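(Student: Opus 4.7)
The plan is to track, for each node $u$ in $T$, the round $J(u)$ at which $u$ is first attached to $T$ by Algorithm \ref{algo:Tconst}, and to bound $\E[J(v)]$, then invoke the trivial inequality $L_T(u) \leq J(u)$. Fix a shortest path $r = v_0, v_1, \ldots, v_k = v$ in $G$ with $k = sp_r(v)$, and set $J(r) = 0$.

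First I would establish $L_T(u) \leq J(u)$ for every $u$ in $T$ by induction along the tree: the parent $p$ of $u$ in $T$ must have been in $T$ strictly before $u$ accepted the corresponding invitation, so $J(p) \leq J(u) - 1$, and combining with the inductive hypothesis $L_T(p) \leq J(p)$ gives $L_T(u) = L_T(p)+1 \leq J(u)$. It then suffices to show $\E[J(v_k)] \leq sp_r(v)\bigl(1 + \trees/q\bigr)$.

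Next I would define an inductive upper bound $\hat{J}_0 = 0$ and $\hat{J}_i = \hat{J}_{i-1} + 1 + W_i$, where $W_i$ counts the rounds needed for $v_i$ to accept an invitation for tree $T$ once such an invitation is in its active set $I$. A straightforward induction shows $J(v_i) \leq \hat{J}_i$: by the inductive hypothesis $v_{i-1}$ is in $T$ no later than round $\hat{J}_{i-1}$, so the invitation $(T, v_{i-1})$ enters $v_i$'s active set by round $\hat{J}_{i-1} + 1$ and remains there until $v_i$ joins $T$; if $v_i$ had already joined $T$ earlier, the bound still holds trivially.

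The key calculation is then $\E[W_i] \leq \trees/q$. Inspecting Algorithm \ref{algo:Tconst}, whenever the active-invitation set is non-empty, $v_i$ commits to some invitation in the current round with probability at least $q$ (deterministically through the $PP$ branch, otherwise via the $q$-fallback on $PQ$). Each commitment attaches $v_i$ to exactly one tree and $v_i$ can attach to each of the $\trees$ trees at most once, so after at most $\trees$ commitments $v_i$ must be in $T$. The number of rounds until the $\trees$-th commitment is stochastically dominated by a sum of $\trees$ independent $\mathrm{Geom}(q)$ variables, whose expectation is $\trees/q$. Telescoping gives $\E[J(v_k)] \leq \E[\hat{J}_k] \leq k\bigl(1 + \trees/q\bigr)$, and combining with $L_T(v) \leq J(v)$ yields the claim.

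The main obstacle is precisely this last bound: the minimum-$pc$ preference rules could in principle keep steering $v_i$'s commitments toward invitations for other trees for many consecutive rounds, so one cannot directly argue a per-round probability $\Omega(q/\trees)$ of joining $T$ specifically. The saving observation is the hard combinatorial cap of $\trees$ on the total number of commitments $v_i$ will ever make, which renders the internal tree-selection bias irrelevant for the expectation.
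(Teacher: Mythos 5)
Your proof is correct and takes essentially the same approach as the paper: once an invitation for $T$ is active, the node accepts \emph{some} invitation each round with probability at least $q$, and the hard cap of $\trees$ total commitments bounds the waiting time by a sum of $\trees$ geometric variables with expectation $\trees/q$; the final bound then follows by induction/telescoping along a shortest path to the root, combined with the observation that the tree level is at most the joining round. Your explicit framing of why the $\min$-$pc$ bias is harmless (the combinatorial cap of $\trees$ commitments) is exactly what the paper invokes with its ``in the worst case, the $\trees$-th accepted invitation is for tree $T$,'' just stated with clearer awareness of the issue.
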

\begin{proof}
We first give an upper bound on the expected number of rounds until a node $v$ accepts an invitation for $T$ after receiving the first invitation. Afterwards, we show Eq. \ref{eq:level} by induction.

In the first step, we denote the number of rounds until acceptance by $Y$.
In order to derive an upper bound on $\E(Y)$, we assume that $v$ does not receive any invitation that it can immediately accept, i.e., an invitation from neighbors $u$ with minimal parent count $pc(u)$. Thus, $v$ accepts one invitation with probability $q$ in each round. In the worst case, the $\trees$-th accepted invitation is for tree $T$.
The number of rounds thus corresponds to the sum of $\gamma$ identically distributed geometrically distributed random variables $X_1, \ldots , X_\trees$.
Here, $X_i$ is the number of trials until the first success of a sequence of Bernoulli experiments with success probability $q$, i.e., the number of rounds until an invitation is accepted. 
The random variable $X=X_1 + \ldots + X_\trees$ describes the number of trials until the $\trees$-th success and presents an upper bound on the expected number of rounds until acceptance of an invitation for tree $T$. 
We hence derive an upper bound on $\E(Y)$ by
\begin{align} 
\label{eq:EY}
\E(Y) \leq E(X) = \sum_{i=1}^\trees \E(X_i) = \trees \E(X_1) = \frac{\trees}{q}.
\end{align}

In the second step, we apply induction  on $l=sp_r(v)$. Note that the level of a node in the tree is at most the number of rounds until an invitation is accepted from the start of the protocol.
For $l=1$, the node $v$ receives an invitation from $r$ at round $1$ of the protocol because $v$ is a neighbor of the root node. 
In expectation, $v$ joins $T$ at round at most $1+\E(Y) \leq 1 +\frac{\trees}{q}$, which shows the claim for $l=1$.
Now, we assume Eq. \ref{eq:level} holds for $l-1$ and show that then it also holds for $l$.
The number of rounds $Z$ until the node $v$ at level $l$ accepts an invitation in tree $T$ is the sum of $Z_1$, the number of rounds until the first invitation is received, and $Z_2$ the number of rounds $v$ accepts after receiving the first invitation.
$v$ is the neighbor of a node $w$ with $sp_r(w)=l-1$ and receives an invitation from $w$ one round after $w$ joined $T$. 
So, $Z_1$ is bound by our induction hypothesis, and $Z_2$ is equal to $Y$ and hence bound by Eq. \ref{eq:EY}.
As a result,
\begin{align*}
\E(Z) = \E(Z_1)+1 + \E(Z_2) \leq  (l-1)\cdot \left( \frac{\trees}{q} + 1\right) + 1 + \frac{\trees}{q} + 1 = l \cdot \left( \frac{\trees}{q} + 1\right),
\end{align*}
and hence indeed Eq. \ref{eq:level} holds.
\end{proof}

Based on Lemma \ref{lem:tconst}, we now prove Theorem \ref{thm:PIE-routing}. The idea of the proof is to bound the routing length by a multiple of expected level of a node.

\begin{proof}
We consider the diversity measure $\delta_{RP-TD}$ first and then 
$\delta_{RP-CPL}$.

For $\delta_{RP-TD}$, the claim follows directly from Lemma \ref{lem:tconst} and Theorem 4.3 in \cite{herzen2011scalable}.
More precisely, the expected level of a node is at most $\calO\left( \frac{\trees}{q}\log n\right)$ assuming a diameter and hence maximal distance to the root of $\calO(\log n)$.
Recall that the distance $\delta_{TD}(\id(\s), \id(\e))$ of two nodes $\s$ and $\e$ corresponds to the length of the shortest path between them in the tree and is an upper bound on the routing.
Now, by Eq. \ref{eq:TD}, the sum of the length of the two coordinates is an upper bound on $\delta_{TD}(\id(\s), \id(\e))$.
As the length of a coordinate is equal to the level of the corresponding node in the tree, we indeed obtain
\begin{align}
\label{eq:rtd-bound}
\E(R^{TD}_{\s,\e})\leq \E(\delta_{TD}(\id(\s), \id(\e))) \leq \E(L_T(s))+\E(L_T(\e))= \calO\left( \frac{\trees}{q}\log n\right).
\end{align}
The last step follows from Lemma \ref{lem:tconst}.
Eq. \ref{eq:R-TD} follows because Eq. \ref{eq:rtd-bound} holds for all source-destination pairs $(\s,\e)$. 

In contrast, the proof for the common prefix length based similarities cannot build on previous results. 
Note that the change of the distance function does not affect the existence of a path with expected length at most $\E(L_T(s))+\E(L_T(\e))$ between source $\s$ and destination $\e$ in the tree. 
However, the routing might divert from that path when discovering a node with a longer common prefix length but at a higher depth.
For this reason, the sum of the expected levels is not an upper bound on the routing length.
Rather, whenever a node with a longer common prefix length is contacted, the upper bound of the remaining number of hops is reset to the expected level of that node in addition to the level of $\e$. 
In the following, we show that such a reset increases the distance in tree by less than $\frac{\trees}{q}$ on average.
The claim then follows because the number of resets is bound by the expected level of the destination. 
Eq. \ref{eq:R-CPL} follows by multiplication of the increased distance per reset and the number of resets.

More precisely, let $X_i$ give the tree distance between the $i$-th contacted node $v_i$ and the target $\e$.
Again, we cannot use the traditional approach for deriving the routing length because $X_i$ is not monotonously decreasing. Rather, we need to bound the number of times $Z_1$ that $X_i$ increases and the expected amount of increase $Z_2$.
Thus, the routing length $R^{CPL}_{\s,\e}$ from a source node $s$ to $\e$ is bound by
 \begin{align}
 \label{eq:z1z2}
 \E(R^{CPL}_{\s,\e}) \leq \E(L_T(s)) + \E(L_T(\e)) + \E(Z_1)\E(Z_2).
 \end{align}
The number of times $Z_1$ the common prefix length can increase is bound by the length of the target's coordinate and hence its level in $T$. So by Lemma \ref{lem:tconst}, 
\begin{align}
\label{eq:z1}
\E(Z_1) \leq \E(L_T(\e)).
\end{align}
The tree distance $X_i$ is potentially increased whenever a node with a longer common prefix length is contacted.
Yet, an upper bound on the expected increase is given by the difference in the levels $L_T(v_i)$ and $L_T(v_{i+1})$ minus $1$ due to the increased common prefix length.
Note that $v_i$ and $v_{i+1}$ are neighbors and hence the length of their shortest path to the root differs by at most $1$. Lemma \ref{lem:tconst} thus provides the desired bound on $\E(Z_2)$ 
 \begin{align}
 \label{eq:z2}
 \E(Z_2) \leq \E\left(L_T(v_i) - L_T(v_{i+1})\right) -1 = \frac{\trees}{q}.
 \end{align}
The desired bound can now be derived from Lemma \ref{lem:tconst}, Eqs. \ref{eq:z1z2}, \ref{eq:z1}, and \ref{eq:z2} under the assumption that the diameter of the graph and hence all shortest paths to the root scale logarithmically, i.e.,
 \begin{align}
 \label{eq:rcpl-bound}
 \E(R^{CPL}_{\s,\e}) \leq \E(L_T(s)) + \E(L_T(\e)) + \E(L_T(\e))\frac{\trees}{q} =
 \calO\left(\left(\frac{\trees}{q}\right)^2 \log n \right).
 \end{align}
As for the first part, Eq. \ref{eq:R-CPL} follows because Eq. \ref{eq:rcpl-bound} holds for all pairs $(\s,\e)$.
\end{proof}

The bounds for a virtual overlay lookup based on routing algorithm $\find$ follow directly from the fact that a DHT lookup requires $\calO(\log n)$ overlay hops with each hop corresponding to one route in the network embedding.

\begin{cor}
\label{cor:dht}
If the DHT used for the virtual overlay offers logarithmic routing, the communication complexity of routing algorithm
$\find$ is 
\begin{align*}
\E(DHT^{TD}) = \calO\left(\frac{\trees}{q}\log^2 n\right)
\end{align*}
for the diversity measure $\delta_{RP-TD}$ and
\begin{align*}
\E(DHT^{CPL}) = \calO\left(\left(\frac{\trees}{q}\right)^2 \log^2 n\right)
\end{align*}
 for diversity measure $\delta_{RP-CPL}$.
\end{cor}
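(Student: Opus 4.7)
The plan is to decompose a DHT lookup into two nested routing processes and then multiply the bounds. At the outer level, a Kademlia-style lookup from any source to a key $f$ traverses $\calO(\log n)$ virtual overlay hops, since at every virtual hop the common prefix length between the current overlay neighbor's identifier and $f$ grows by at least one in expectation (this is the standard Kademlia routing guarantee, which is assumed in the hypothesis ``If the DHT used for the virtual overlay offers logarithmic routing''). At the inner level, each virtual overlay hop is not a direct communication but is realized by executing $\routeTD$ or $\routeRAPCPL$ in one of the $\trees$ embeddings, targeting the RP return address of the next overlay neighbor.

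The key step is to invoke Theorem \ref{thm:PIE-routing} once per inner hop. Let $H_j$ denote the number of embedding-level messages needed to realize the $j$-th overlay hop, and let $K=\calO(\log n)$ be the (random) number of overlay hops. Then the total communication cost $\find$ satisfies
\begin{equation*}
\E(\find) = \E\!\left(\sum_{j=1}^{K} H_j\right).
\end{equation*}
By Theorem \ref{thm:PIE-routing}, conditioned on any source and destination coordinate pair at the embedding level, $\E(H_j)$ is bounded by $\calO\!\left(\tfrac{\trees}{q}\log n\right)$ for $\delta_{RP-TD}$ and by $\calO\!\left((\tfrac{\trees}{q})^2\log n\right)$ for $\delta_{RP-CPL}$, and these bounds hold uniformly over the pair. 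I would then apply Wald-style reasoning (or simply linearity of expectation together with the uniform upper bound on $\E(H_j\mid K)$) to obtain $\E(\find)\leq \E(K)\cdot \max_j \E(H_j)$, yielding $\calO\!\left(\tfrac{\trees}{q}\log^2 n\right)$ and $\calO\!\left((\tfrac{\trees}{q})^2\log^2 n\right)$ respectively.

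The only subtlety worth flagging is that Theorem \ref{thm:PIE-routing} is stated for routing to a node coordinate, whereas here the routing targets an RP return address. However, Theorem \ref{thm:rpRA} established that RP return addresses preserve the neighbor selection at every forwarding step, so the realized route (and hence its length) coincides with what the corresponding coordinate-based algorithm would have produced; thus the bounds from Theorem \ref{thm:PIE-routing} transfer without change. I would also note that sending back the return address after the discovery terminates only adds a symmetric cost along the same (or shorter) path, so it does not alter the asymptotic bound.

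The main obstacle is really just bookkeeping rather than a genuine hurdle: one must be careful that the bound of $\calO(\log n)$ on $\E(K)$ is inherited from the DHT hypothesis and does not secretly depend on the embedding parameters, and that the per-hop expectation from Theorem \ref{thm:PIE-routing} is a uniform bound (independent of which particular overlay neighbor is being addressed) so that the multiplication is legitimate. Both facts hold by inspection, so the corollary follows directly by substituting the bounds of Theorem \ref{thm:PIE-routing} into $\E(K)\cdot \max_j \E(H_j)$.
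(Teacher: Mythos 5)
Your proposal is correct and follows essentially the same approach as the paper, which simply multiplies the $\calO(\log n)$ overlay-hop count from the DHT hypothesis by the per-hop bound from Theorem~\ref{thm:PIE-routing}. You spell out the Wald-style decomposition and the transfer from coordinate-targeted to return-address-targeted routing via Theorem~\ref{thm:rpRA}, details the paper compresses into a single sentence, but the underlying argument is identical.
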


\paragraph{Stabilization:} The stabilization complexity is required to stay polylog in the network size to allow for scalable communication and content addressing.
In the following, we hence give bounds for the self-stabilization of the network embeddings, the costs for the virtual overlay follow by considering the maintenance costs for DHT as suggested for general overlay networks and multiplying with the length of the routes between overlay neighbors.

\begin{theorem}
\label{thm:stabilize}
We assume the social graph $G$ to be of a logarithmic diameter and a constant average degree.
Furthermore, we assume the use of a the root election protocol with complexity $\calO(n \log n)$.
Then stabilization complexity $CS^\stab$ of the spanning trees constructed by Algorithm \ref{lem:tconst} with parameters $\trees$ and $q$ for one topology change is
\begin{align}
\label{eq:s}
\E(CS^\stab) = \calO\left( \trees \frac{\trees}{q} \log n\right).
\end{align}
\end{theorem}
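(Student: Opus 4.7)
The plan is to split a single topology change into two scenarios---a node joining and a non-root node leaving---and to bound the expected cost of each by $\calO(\trees (\trees/q) \log n)$. Root departures would trigger the $\calO(n \log n)$ re-election cost already budgeted in the hypothesis; they are rare under a uniformly chosen topology change and can be amortized separately.

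For a joining node $u$: because $G$ has constant average degree, $u$ queries its $\calO(1)$ neighbors for their coordinates and join-timestamps across all $\trees$ trees in $\calO(1)$ message exchanges, then simulates Algorithm \ref{algo:Tconst} locally to fix its parent and coordinate in every tree, requesting the coordinate from the chosen parent in each case. The total communication is $\calO(\trees)$, well within the claimed bound.

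For a leaving non-root node $v$: in each tree $T_i$, only the subtree $T_i(v)$ rooted at $v$ must be re-embedded, since every node with an ancestor outside $T_i(v)$ retains a valid parent chain to the root and hence a valid coordinate. Each $u \in T_i(v)$ re-runs Algorithm \ref{algo:Tconst} with its $\calO(1)$ neighbors and then receives a fresh coordinate from its new parent, contributing $\calO(1)$ messages. The per-tree cost is therefore $\calO(|T_i(v)|)$ up to constants. To bound $\E(|T_i(v)|)$ for a uniformly random non-root $v$, I invoke the standard identity
\begin{align*}
\sum_{v \in V} |T_i(v)| \;=\; \sum_{u \in V} \bigl(L_{T_i}(u) + 1\bigr),
\end{align*}
which holds because each $u$ lies in the subtree of each of its ancestors (and itself). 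Lemma \ref{lem:tconst} combined with $diam(G) = \calO(\log n)$ gives $\E(L_{T_i}(u)) = \calO((\trees/q) \log n)$ for every $u$, so $\E(|T_i(v)|) = \calO((\trees/q) \log n)$. Summing the per-tree cost across the $\trees$ trees yields the claimed $\calO(\trees (\trees/q) \log n)$.

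The main obstacle will be convincing the reader that each reassigned node contributes only $\calO(1)$ messages per tree rather than $\calO(\trees/q)$: Lemma \ref{lem:tconst} shows that a node may wait on the order of $\trees/q$ abstract rounds before accepting an invitation, but rounds are a scheduling device and the actual number of invitation messages exchanged with any given neighbor per tree remains a small constant. A secondary subtlety is verifying that the re-embedding cascade is confined to $T_i(v)$: a child of $v$ may pick a new parent lying outside $T_i(v)$, but that parent's own coordinate is unaffected by $v$'s removal, so coordinate updates propagate strictly downward from the boundary and never perturb unaffected subtrees.
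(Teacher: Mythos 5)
Your proposal is correct and follows essentially the same route as the paper: split by join versus non-root departure (handling root departure by amortizing the $\calO(n \log n)$ re-election over its $1/n$ probability), bound the expected reconnect cost for a departure by the expected subtree size, convert that to the expected level via the double-counting identity $\sum_v |T_i(v)| = \sum_u (L_{T_i}(u)+1)$, apply Lemma \ref{lem:tconst}, and multiply by $\trees$. Your remarks on why each reassignment is $\calO(1)$ messages and why the cascade stays within $T_i(v)$ are small clarifications the paper leaves implicit, but the decomposition and the key estimate are the same.
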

\begin{proof}
We first consider the complexity for one tree. The general result then follows by multiplying with the number of trees $\trees$.
When a node joins an overlay with a constant average degree, the communication complexity of receiving and replying to all invitations is constant.
For a node departure, we consider  non-root nodes and root nodes separately.
If a any node but the root departs, the expected stabilization complexity corresponds to the number of nodes that have to rejoin $T$.
This number of nodes is equal to the number of descendants in a tree.
Hence, the expected complexity of a departure corresponds to the expected number of descendants. 
Consider that a node on level $l$ is a descendant of $l$ nodes, so that the expected number of descendants $D$ is given by
\begin{align*}
\E(D) = \frac{1}{|V|}\sum_{v \in V} \E(L_T(v)) = \calO\left( \frac{\trees}{q} \log n \right).
\end{align*}
If the root node leaves, the spanning tree and the embedding have to be re-established at a complexity of $\calO(n \log n)$.
As the probability for the root to depart is $1/n$, we indeed have
\begin{align*}
\E(CS^\stab) = \calO\left( \frac{\trees}{q} \log n \right) + \calO\left(\frac{1}{n} n \log n\right) = \calO\left( \frac{\trees}{q} \log n \right).
\end{align*}
\end{proof}

We have shown that the complexity of routing, content discovery, and stabilization is bound (poly-)log as required.

\subsection{Simulations}
\label{sec:voute-eff-sim}
In this section, we validate the above bounds and relate them to the concrete communication overhead for selected scenarios. 
We start by detailing our simulation model and set-up, followed by our expectation, the results and their interpretation. 

\paragraph{Model and Evaluation Metrics:}
In order to evaluate the efficiency, we consider the routing length and the stabilization complexity.
We express the stabilization costs in terms of the average number of coordinates that have to reassigned when a randomly chosen node leaves, i.e., the average number of descendants of a node.
The number of messages required for the assigning the new coordinates is at most two per assignment, namely the disconnected node registering at a new parent and receiving a new coordinate.
We conducted the study to determine how the number of trees, the tree construction algorithm, and the distance function affect routing and stabilization costs. 

We compared our results to those for Freenet, a virtual overlay $VO$, and the original PIE embedding.
The virtual overlay $VO$ combines the advantages of X-Vine and MCON by using shortest paths as tunnels in a Kademlia overlay like MCON 
but integrating backtracking in the presence of local optima and shortcuts from one tunnel to another like X-Vine.

\paragraph{Set-up:}
Due to space constraints, we restrict the presented results to one example network, namely the giant component of a community network from Facebook with 63392 users~\footnote{\url{http://konect.uni-koblenz.de/networks/facebook-wosn-links}}.

The spanning tree construction in Algorithm \ref{algo:Tconst} is parametrized by the number of trees $\trees \in \{1,2,3,5,7,10,12,15\}$, the acceptance probability $q=0.5$, and the selection criterion $W$ chosen to be either random selection (denoted \emph{DIV-RAND}) or preference of nodes at a low depth (denoted \emph{DIV-DEP}).
In addition, we consider a breadth first search for spanning tree construction (denoted \emph{BFS}).
Moreover, we consider the impact of the two distances $\delta_{TD}$ (denoted \emph{TD}) and $\delta_{CPL}$ (denoted \emph{CPL}).
The length of the return addresses was set to $\lengthAd=128$ and the number of bits per element was $b=128$, all $\tau=\trees$ embeddings were considered for routing.

For the virtual overlay used for content addressing, we chose a highly resilient recursive Kademlia \cite{heep2010r} with bucket size $8$ and $\alpha \in \{1,3\}$ parallel look-ups. Because routing table entries are not uniquely determined by Kademlia identifiers, the entries were chosen randomly from all suitable candidates. 

We parametrized the related approaches as follows. 
For simulating Freenet, we executed the embedding for $6,000$ iteration as suggested in \cite{sandberg2006distributed} and then routed using a distance-directed depth-first search based only on the information about direct neighbors.
The routing and stabilization complexity of the original PIE embedding is equal to the respective quantities of our algorithm for $\trees=1$, the distance function $\delta_{TD}$ and routing without the use of backtracking. 
In order to better understand the results of the comparison, we simulate the virtual overlay $VO$ using the same Kademlia overlay as for our own approach but replacing the tree routing by tunnels corresponding to the shortest paths between overlay neighbors. 
So, we parametrized the related approaches by either using the proposed standard parameters or selecting parameters that are suitable for comparison because they corresponds to the same degree of redundancy as the parametrization of our own approach.

All results were averaged over 20 runs. They are displayed with $95$\% confidence intervals.
Each run consisted of $100,000$ routing attempts for a randomly selected source-destination pair.

\paragraph{Expectations:}
We expect that the routing length decreases with the number of embeddings, because the number of available routes and thus the probability to discover the shortest route in one embedding increases. 
In general, the routing length is directly related to the tree depth and should thus be lower for \emph{BFS} and \emph{DIV-DEP}.

Similarly, we expect a higher stabilization overhead for trees of a higher depth as the expected number of descendants per node increases.
Thus, the number of nodes that need to select a new parent should be higher for \emph{DIV-RAND} than for \emph{DIV-DEP} and \emph{BFS}.

In comparison to the existing approaches, our approach should enable shorter routes between pairs of nodes than both Freenent and VO.
As shown above, we achieve a routing complexity of $\calO\left(\log n\right)$ whereas the related work achieves at best routes of polylog length.
However, our routes for content discovery should be slightly longer than in VO. VO utilizes the same DHT routing but uses shortest paths rather than the longer tree routes.

\paragraph{Results:}

\begin{figure*}[ht]
\subfloat[$\routeN$]{\label{fig:rt}\includegraphics[width=0.33\linewidth]{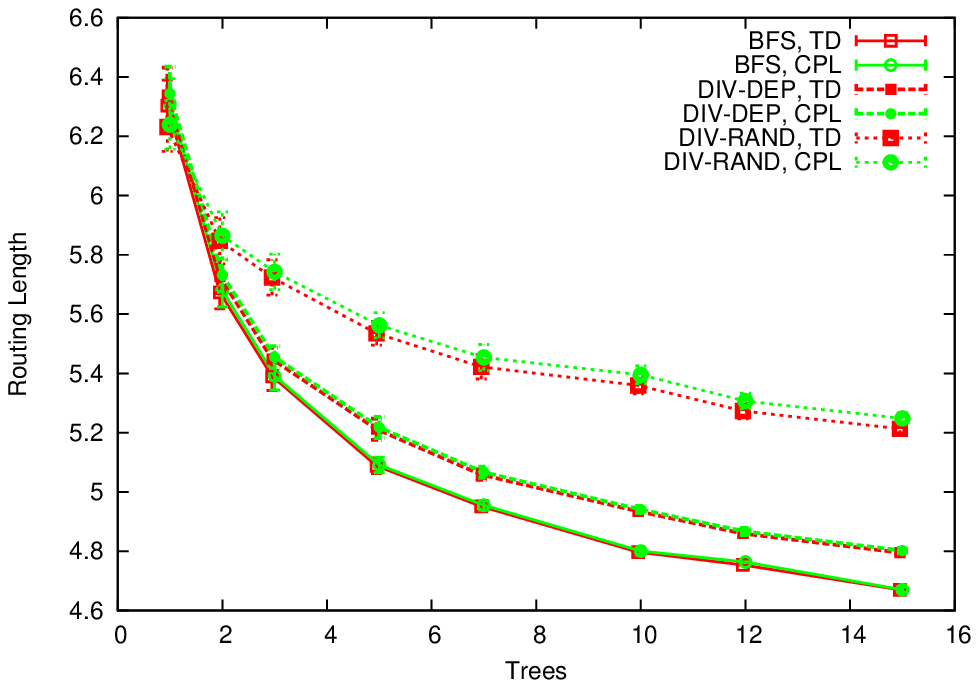}}
\hfill
\subfloat[$\find$]{\label{fig:kad}\includegraphics[width=0.33\linewidth]{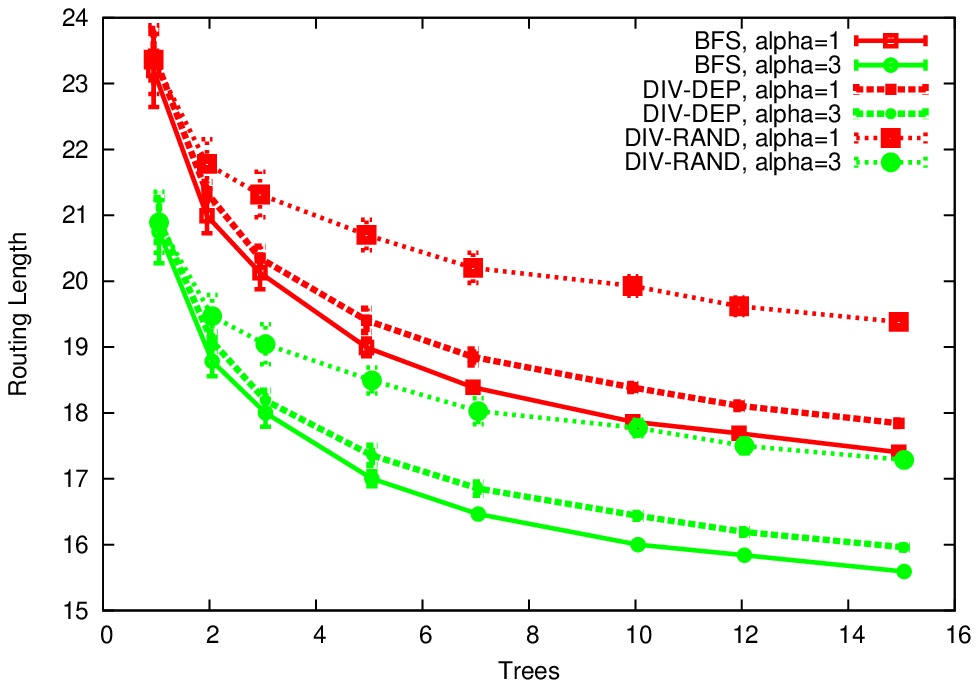}}
\hfill
\subfloat[Stabilization]{\label{fig:main}\includegraphics[width=0.33\linewidth]{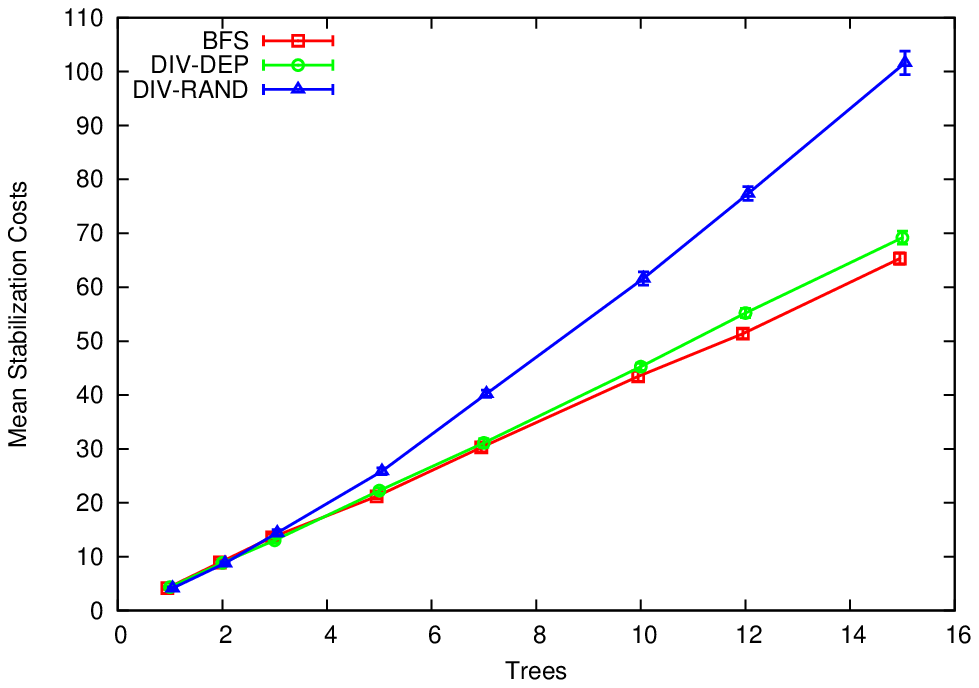}}
\caption{Impact of number of embeddings $\trees$, tree construction, and distance function on routing length for a) tree routing and b) Kademlia lookup with  degree of parallelism $\alpha$; related approaches result in routing lengths of $14$ (virtual overlay $VO$) and close to $10,000$ (Freenet), and c) stabilization overhead }
\label{fig:static}
\end{figure*}

The impact of the three parameters, number of trees, tree construction, and distance on the routing length confirms our expectations.
First, the results indicate that the tree construction, in particular the number of trees, is the dominating factor for the routing length. So, the routing length decreased considerably if multiple embeddings were used because the shortest route in any of the trees was considered. 
Second, preferring parents closer to the root, i.e., using \emph{BFS} or \emph{DIV-DEP}, produced shorter routes in the tree and hence reduced the routing length.
Third, in comparison to the tree construction, the choice of a distance function had less impact.
For \emph{BFS} or \emph{DIV-DEP}, the advantage of \emph{TD} over \emph{CPL} was barely noticeable, whereas the difference for \emph{DIV-RAND} was still small but noticeable. 
In order to understand this difference, note that \emph{CPL} is expected to lead to longer routes.
The reason for the longer routes lies in forwarding the request to neighbors at a higher depth, which might have a long common prefix but are nevertheless at a higher distance from the destination due to their depth. 
For \emph{BFS} or \emph{DIV-DEP}, the difference of the depth of neighbors was generally small because neighbors at a lower depth were preferably selected as parents. 
In contrast, \emph{DIV-DEP} allows for larger differences in  depth. Hence there is a higher probability to increase the tree distance by selecting a neighbor with a longer common prefix length but at a high depth.
All in all, the routing length varied between $4.67$ (\emph{BFS}, $\trees=15$, \emph{TD}) and $6.24$ (\emph{DIV-RAND}, $\trees=1$, \emph{CPL}) hops, as displayed in Figure \ref{fig:rt}.
In summary, the use of  multiple embeddings indeed reduced the routing length considerably.

The performance of the DHT lookup in the virtual overlay directly related to the previous results (cmp. Fig. \ref{fig:kad} for the distance under {\em TD}).
The overhead for the discovery of a randomly chosen Kademlia ID, stored at the node with the closest ID in the overlay, varied between $15.56$ and $24.25$ hops in the F2F overlay, at around 4 hops in the virtual overlay.

By Theorem \ref{thm:stabilize}, the stabilization complexity was expected to increase at most quadratic with the number of trees.
Indeed, Figure \ref{fig:main} supports this fact for \emph{DIV-RAND}.
The increase for \emph{BFS} and \emph{DIV-DEP} was even only linear and slightly super-linear, respectively.
Note that the quadratic increase is due to the raising average depth of additional trees generated by Algorithm \ref{algo:Tconst}. 
With the goal of achieving diverse spanning trees, nodes select parents at a higher depth.
However, the average number of descendants increases with the depth,
because a node at depth $l$ is a descendant of $l$ nodes.
Due to the stabilization complexity corresponding to the number of the departing node's descendants,
the stabilization overhead was  higher for \emph{DIV-RAND} and \emph{DIV-DEP} than for \emph{BFS}.
More precisely, \emph{BFS} constructs all $\trees$ trees independently, so that the average depth of each tree is independent of the number of trees.
The stabilization complexity per tree thus remains constant. 
\emph{DIV-DEP}, aiming to balance diversity and short routes, causes stabilization overhead between the two former approaches, but performed closer to \emph{BFS} (this similarity also held for the routing length).
More concretely, the average stabilization overhead for a departing node was slightly below $4.5$ for a single tree.
For $\trees=15$ it increased to $65$ (\emph{BFS}), $69$ (\emph{DIV-DEP}), and more than $101$ (\emph{DIV-RAND}).
In contrast to a complete re-computation of the embedding requiring at least $n=63392$ messages, the stabilization overhead is negligible.

For the related approaches, we found a routing length of $9403.1$ for Freenet, $16.11$ for VO with $\alpha=1$, and $14.07$ for VO with $\alpha=3$. 
Furthermore, the shortest paths are on average of length $4.31$, meaning that our routing length of $4.67$ is close to optimal.
So, routing between nodes in the tree required less than half the overhead of  state-of-the-art approaches.
Routing in the virtual overlay, requiring at best less than $16$ hops in our scheme, was slightly more costly in our approach than in VO due to the inability of the tree routing to guarantee shortest paths between virtual neighbors.

A straight-forward comparison of the stabilization overhead was not possible. 
Since Freenet stabilizes periodically, there is no overhead directly associated with a leaving node. 
In case of virtual overlays, VO uses flooding for stabilization, which is clearly more costly. 
Other overlays such as X-Vine use less costly stabilization but stabilization and routing overhead are unstable and increase over time as shown in \cite{roos2015impossibility}, so that it is unclear which state of the system should be considered for a comparison.
In order to nevertheless give a lower bound on the stabilization overhead, we computed the number of tunnels that needed to be rebuild in VO. On average, $477.35$ tunnels corresponding to shortest paths were affected by a departing node. 
If a tunnel is repaired by routing in the Kademlia overlay like in X-Vine, the stabilization overhead per tunnel corresponds to routing a request and the corresponding reply, i.e., for tunnels corresponding to shortest paths at least $2 \cdot 14=28$ messages, resulting in a lower bound on more than $10,000$ messages per node departure. 
The above stabilization algorithm is unable to maintain short routes, such that the actual overhead of stabilization in virtual overlay is even higher than the above lower bound.

\paragraph{Discussion:}
Our simulation study validates the asymptotic bounds. 
Indeed, the routing length and thus the routing complexity for messaging is very low, improving on the state-of-the-art by more than a factor of $3$.
The stabilization complexity is similarly low if the number of trees is not too high. Even for $\trees=15$ trees, the number of involved nodes is generally well below 100, which still improves upon virtual overlays such as VO, the most promising state-of-the-art candidate. 
Only content discovery in form of a DHT lookup was slightly more costly in our approach than in VO, which we consider acceptable given the considerable advantage  with regard to all other metrics.

\paragraph{}We have considerably improved the efficiency of F2F overlays. In the following, we show that we also mitigated their vulnerability to failures and attacks.

\section{Robustness and Censorship-Resistance}
\label{sec:voute-eval-resi}

In this section, we consider the robustness and resilience to censorship of VOUTE.
Note that the evaluation of the censorship-resilience requires a specification of the modified stabilization algorithm
$\stab'$, which refer to as \emph{attack strategy} in the following. 
After deriving two attack strategies, we subsequently present our theoretical and simulation-based evaluation.

We express our results in terms of node coordinates and distances $\delta_{TD}$ and $\delta_{CPL}$ rather than the corresponding diversity measures. The use of distances simplifies the notation as we do not need to apply a hash cascade for the comparison of coordinates and return addresses.
As the routing paths are chosen identical for both coordinates and return addresses, the results are equally valid for return addresses.

\subsection{Attack Strategy}
We first describe our attack strategies and then comment on additional strategies and our reasons on selecting 
In order to model secure and insecure root selection protocols, we consider two realizations of \emph{ATT-RAND} and \emph{ATT-ROOT}.
In the following, assume that one attacker node has established $x$ links to honest nodes and now aims to censor communication. 

For secure spanning trees, the adversary $A$ is unable to manipulate the root election.
Nevertheless, $A$ can manipulate the subsequent embedding.
The attack strategy \emph{ATT-RAND} assigns each of its children a different
random prefix rather than the correct prefix. In this manner,
routing fails because
nodes in the higher levels of the tree do not recognize the
prefix. So, the impact of the attack is increased in comparison to a random failure.

In contrast, if the adversary $A$ can manipulate the root election protocol, \emph{ATT-ROOT} manipulates the root election in all spanning trees such that $A$ becomes the root in all trees.
Under the assumption that the root observes the maximal number of requests, the attack should result in a high ratio of failed requests.

\paragraph{} Now, we shortly comment on some further attack strategies we choose not to implement and give reasons for our decision not to do so.

First, note that in the original PIE embedding, assigning the same coordinate to two children is another attack strategy. In contrast to the above strategy, the routing can then fail even if the attacker is not involved in forwarding the actual request because the node coordinates are not unique and thus the request might end up at a different node than the receiver. 
In the modified embedding, the child decides on the last element of the coordinate. 
Hence, the attacker can only assign a node $w$ the coordinate of another node $v$ as a prefix, so that the two nodes appear to be close but are indeed not. However, upon realizing that $w$ does not offer a route to $v$, the routing algorithm backtracks, so that this attack strategy merely increases the routing complexity but not the success ratio.
Thus, we do not consider it here.  

Second, recall from Section \ref{sec:req} that the attacker can also generate an arbitrary number of identities whereas the above attack strategies only rely on one identity. In the following, we argue that without additional knowledge, the use of additional identities in the tree does not improve the strength of the attack.

 \emph{ATT-RAND} actually simulates different virtual identities by providing fake distinct prefixes to all children.
 Indeed, in practice, it might be wise to indeed use distinct physical nodes because it minimizes the risk of detection if two neighbors realize that they are connected to the same physical node but received different prefixes.

For \emph{ATT-ROOT}, the attacker might have to create (virtual) identities in order to manipulate the root election. 
As soon as $A$ is the root in each tree, multiple identities could be used to provide prefixes of different lengths. 
However, if a neighbor $u$ of $A$ receives a long prefix from $A$, there is a high chance that $u$ and potential descendants of $u$ choose different parents seemingly closer to the root.
Thus, in expectation a large number of nodes joins those subtrees rooted at a neighbor of $A$ with a short prefix. As routing within such a subtree does not require to forward a request from $A$, $A$'s impact is likely to be reduced. 
Hence, without concrete topology knowledge, the insertion of additional virtual identities (corresponding to prefixes of different lengths) does usually not present an obvious advantage for $A$.

\subsection{Theoretical Analysis}

We present two theoretical results in this section. First, we characterize the backtracking algorithm more closely.
Second, we show that the censorship-resistance is improved by using the distance $\delta_{CPL}$ rather than
$\delta_{TD}$.

Throughout this section, let $\routeTD$ and $\routeRAPCPL$ denote Algorithm \ref{algo:route} with
distance $\delta_{TD}$ and $\delta_{CPL}$, respectively. Furthermore, let $\mathbf{GR}^{TD}$ and $\mathbf{GR}^{CPL}$  denote the
corresponding standard greedy routing algorithms, which terminate in local optima with regard to the distance to the destination's coordinate.  
Let $Succ^\route$ denote the success ratio of a routing algorithm $\route$. 
We are considering the success ratio for one embedding. The overall success ratio is improved as it is the combined
success ratio of all embeddings. 

\begin{lemma}
\label{thm:voute-back}
We have that 
\begin{align}
\label{eq:backbetter}
\begin{split}
\E\left(Succ^{\routeTD}\right)&\geq \E\left(Succ^{\mathbf{GR}^{TD}}\right)\\
\E\left(Succ^{\routeRAPCPL}\right)&\geq \E\left(Succ^{\mathbf{GR}^{CDF}}\right)
\end{split}. 
\end{align}
Furthermore, Algorithm \ref{algo:route} is successful if and only if there exists a greedy path of responsive nodes according to its distance
metric $\dist$.
\end{lemma}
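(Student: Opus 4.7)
I plan to prove the ``if and only if'' characterization first and then read off the inequalities in Eq.~\ref{eq:backbetter} as an immediate corollary; the core idea is to model Algorithm~\ref{algo:route} as a depth-first search on a carefully chosen directed graph of ``greedy'' forwarding edges.

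For the $(\Rightarrow)$ direction, I would take any successful execution and project the message's trajectory onto the forwarding events of Line~\ref{algo:forward}, discarding the backtracking hops of Line~\ref{algo:backtrack}. Each retained hop is explicitly guarded by the test that the next node is strictly closer to the target under $\dist$, so the projected sequence from $\s$ to $\e$ has strictly decreasing distance to $\e$ and is, by definition, a greedy path; every node on it is responsive, because otherwise the forwarding step could not have occurred.

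For $(\Leftarrow)$, I would model the execution as a randomized DFS on the directed graph $G' = (V', E')$ whose vertices are the responsive nodes and whose edges go from each $u$ to its responsive neighbours $v$ with $\dist(\id(v),x_i) < \dist(\id(u),x_i)$. Two facts drive the argument: $G'$ is acyclic, because each of its edges strictly decreases $\dist(\cdot, x_i)$; and the bookkeeping via $S(msg)$ together with the guard on Line~\ref{algo:pred} ensures that at every node each outgoing edge is explored at most once and control returns to the predecessor exactly when no untried closer neighbour remains. Starting from $\s$ the algorithm therefore performs a DFS on $G'$ which will reach $\e$ whenever $\e$ is reachable from $\s$ in $G'$. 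Since a greedy path of responsive nodes from $\s$ to $\e$ is by construction a directed $\s$-to-$\e$ path in $G'$, such a path exists iff $\e$ is DFS-reachable iff the algorithm fires its success branch on Line~\ref{algo:success}.

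The inequalities follow pointwise: on any instance on which $\mathbf{GR}^{TD}$ succeeds, its trace is already a greedy path of responsive nodes, so by the $(\Leftarrow)$ direction $\routeTD$ succeeds on the same instance regardless of its random tie-breaking, and taking expectations yields the first inequality; the $\delta_{CPL}$ case is identical. The main obstacle I anticipate is the careful formalisation of the DFS reduction itself: one must verify that $S(msg)$ grows monotonically at each node, that $pred(msg)$ is assigned exactly once per node per request (ensured by the $w \notin S(msg)$ guard on Line~\ref{algo:pred}), and that termination in finitely many rounds is inherited from $G'$ being a finite DAG. Once those bookkeeping invariants are laid out the rest of the argument is short.
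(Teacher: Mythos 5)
Your proof is correct in substance and reaches the same conclusions, but it organizes the argument differently from the paper. The paper first proves the two inequalities directly, observing that Algorithm~\ref{algo:route} coincides with plain greedy routing until the latter would stop and can only keep searching afterwards; it then proves the characterisation separately: for $(\Leftarrow)$ it fixes a greedy path $p=(v_0,\ldots,v_l)$, lets $V_R$ be the set of nodes that forwarded the request, sets $j=\max\{i:v_i\in V_R\}$, and argues that $v_j$ must eventually have tried $v_{j+1}$ before backtracking; for $(\Rightarrow)$ it notes that when no greedy path exists, every $\s$-to-$\e$ path contains a non-improving hop that Line~\ref{algo:forward}'s guard refuses. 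You invert the order, derive the inequalities as a corollary of the characterisation, and recast $(\Leftarrow)$ as reachability under a depth-first search on the DAG of strictly improving edges among responsive nodes. That framing is a genuinely tidier way to say what the paper leaves implicit (termination from finiteness of the DAG, completeness from the exhaustion-before-backtrack invariant), at the cost of having to verify the DFS-like invariants explicitly; the paper's version is more compact but less systematic.

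Two places deserve a repair in your write-up. First, in $(\Rightarrow)$ you ``project the trajectory onto the forwarding events, discarding the backtracking hops,'' and claim the projected sequence is strictly distance-decreasing. Discarding only the Line~\ref{algo:backtrack} hops does not yield a path: if the execution goes $\s\to u_1\to u_2$, backtracks to $u_1$, then $u_1\to u_3\to\cdots\to\e$, the retained forwards include both $u_1\to u_2$ and $u_1\to u_3$, and the resulting node sequence is neither a path nor monotone in $\dist$. The object you actually want is the predecessor chain recorded in $pred(\cdot)$ at the moment Line~\ref{algo:success} fires: $pred(msg)$ is assigned only on a non-backtrack receipt (the guard on Line~\ref{algo:pred}), each such receipt is preceded by a forward satisfying the strict-improvement test, so unwinding $pred$ from $\e$ back to $\s$ gives exactly the claimed greedy path of responsive nodes. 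Second, a smaller caveat in $(\Leftarrow)$: the execution is not literally a DFS, since a node can be re-entered from a new predecessor after having been backtracked out of. What actually drives termination and completeness is that each node's local set $S(msg)$ is monotone and persistent, so no directed edge of $G'$ is ever traversed forward twice and no node backtracks before it has exhausted every strictly closer untried neighbour; stating the invariant in those terms makes the argument airtight. With those adjustments your proof is sound.
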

\begin{proof}
Eq. \ref{eq:backbetter} follows because Algorithm \ref{algo:route} is identical to the standard greedy algorithm until the
latter terminates. Then, Algorithm \ref{algo:route} continues to search for an alternative, possible increasing the success ratio.

For the second part, recall that a greedy path is a path $p=(v_0, \ldots , v_l)$ such that the distance to the destination $v_l$ decreases in each step, i.e., $\delta(\id(e), \id(v_i)) < \delta(\id(e), \id(v_{i-1}))$ for all $i=1..l$ and a distance $\delta$.
Assume Algorithm \ref{algo:route} does not discover a route from the source $v_0=\s$ and $v_l=\e$ despite the existence of a greedy path $p=(v_0, v_1, \ldots , v_{l-1}, v_l)$ of responsive nodes. Let $V_R$ be the set of nodes that forwarded the request according to Algorithm \ref{algo:route}, and let 
$j=\max \{i: v_i \in V_R\}$.
Then the neighbor of $v_{j+1}$ did not receive the request despite being closer to $\e$ than $v_j$. Though $v_j$ might have a neighbor  $w$ closer to $\e$ than $v_{j+1}$, the request is backtracked to $v_j$ if forwarding to $w$ does not result in a route to the destination.
Routing only terminates if either a route is found or $v_j$ has forwarded the request to all closer neighbors, including $v_{j+1}$.
Thus, Algorithm \ref{algo:route} cannot fail if a greedy path exists.
In contrast, if there are not any greedy paths from $\s$ to $\e$, any path  $p=(v_0, v_1, \ldots , v_{l-1}, v_l)$ with $v_0=\s$ and $v_l=\e$ contains a pair $(v_{i-1},v_{i})$ with
$\delta(\id(e), \id(v_i)) \geq \delta(\id(e), \id(v_{i-1}))$. Thus, Algorithm \ref{algo:route} does not forward the request to $v_{i}$ and hence does not discover a path from $\s$ to $\e$. 
It follows that indeed  Algorithm \ref{algo:route} is successful if and only if a greedy path of responsive nodes exists. 
\end{proof}

Now, we use Lemma \ref{thm:voute-back} to show that using a common prefix length based distance generally enhances the censorship-resistance.
\begin{theorem}
\label{thm:voute-betterAS}
Let $A$ be an attacker applying either \emph{ATT-RAND} or \emph{ATT-ROOT}. Then for all distinct nodes $\s, \e \in V$
\begin{align}
\label{eq:succ_implies}
Succ^{\routeRAPCPL}_{\s,\e}=0 \implies Succ^{\routeTD}_{\s,\e}=0,
\end{align}
i.e., if $\routeRAPCPL$ does not discover a route between $\s$ and $\e$, then $\routeTD$ does not discover a route.
However, the converse does not hold. 
In particular, 
\begin{align}
\label{eq:Esucc}
\E\left(Succ^{\routeRAPCPL}\right) \geq \E\left(Succ^{\routeTD}\right).
\end{align}
\end{theorem}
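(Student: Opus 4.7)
The plan is to prove the contrapositive: whenever $\routeTD$ finds a route of responsive nodes from $s$ to $e$, so does $\routeRAPCPL$. By Lemma \ref{thm:voute-back}, this reduces to showing that the existence of a $\delta_{TD}$-greedy path of responsive nodes from $s$ to $e$ implies the existence of a $\delta_{CPL}$-greedy path of responsive nodes. The cornerstone of the proof is a direct calculation showing that the tree path $\pi_T$ from $s$ up to $\mathrm{LCA}(s,e)$ and then down to $e$ is always $\delta_{CPL}$-greedy: on the upward segment, $cpl(\cdot,e)$ stays fixed while the coordinate length strictly shrinks, which makes the tie-breaker $-1/(|\cdot|+|e|+1)$ strictly smaller; on the downward segment, $cpl(\cdot,e)$ grows by exactly one per hop, reducing the main term $\lengthAd - cpl$ by one and dominating any change in the tie-breaker, which is bounded by one in absolute value. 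Hence if no attacker lies on $\pi_T$, the path $\pi_T$ itself witnesses $Succ^{\routeRAPCPL}_{s,e}>0$.

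The remaining case is that an attacker $A$ lies on $\pi_T$. Under \emph{ATT-ROOT} this forces $A$ to be the root, so $cpl(s,e)=0$; under \emph{ATT-RAND}, $A$ must be an ancestor of $s$ or of $e$, and if $e$ lies in $A$'s subtree then $e$'s published coordinate is inconsistent with the tree and neither algorithm can reach $e$, so the implication is trivial. Otherwise $s$ sits in a region separated from $e$'s root-subtree by the attacker. I then invoke the following pointwise step-wise claim: at any node $u$ with $cpl(u,e)=0$, every neighbor $v$ that is strictly $\delta_{TD}$-closer to $e$ is also strictly $\delta_{CPL}$-closer. This is because $\Delta c = cpl(v,e)-cpl(u,e)\geq 0$; if $\Delta c>0$ then $\lengthAd-cpl$ strictly drops and CPL-greediness is automatic, and if $\Delta c=0$ then $\delta_{TD}$-greediness reduces to $|v|<|u|$, which is exactly the $\delta_{CPL}$ tie-breaker criterion. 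Applying this claim along the $\routeTD$ path until it first enters $e$'s root-subtree at some node $w$, the entire prefix from $s$ to $w$ is $\delta_{CPL}$-greedy with responsive nodes, and I then splice in the tree descent from $w$ to $e$, which is $\delta_{CPL}$-greedy by the first paragraph and attacker-free (the root is not in $e$'s subtree under \emph{ATT-ROOT}, and $A$ is neither an ancestor of $w$ nor of $e$ under \emph{ATT-RAND}).

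For the converse, I exhibit one concrete pair $(s,e)$ with $Succ^{\routeTD}_{s,e}=0 < Succ^{\routeRAPCPL}_{s,e}$, e.g., the configuration sketched in Figure \ref{fig:dist} augmented with an attacker placed near the tree root: $\routeTD$ is drawn toward the short-tree-distance neighbor and hits the attacker, whereas $\routeRAPCPL$ keeps the request inside $e$'s subtree and succeeds. The expectation inequality in Eq. \ref{eq:Esucc} then follows from the pointwise bound $Succ^{\routeRAPCPL}_{s,e}\geq Succ^{\routeTD}_{s,e}$ by linearity of expectation over the choice of source-destination pair. The step I expect to be the main obstacle is the \emph{ATT-RAND} sub-case in which $s$ resides in the attacker's subtree and therefore carries a fake coordinate: the step-wise CPL/TD comparison has to be performed with respect to the coordinates the routing algorithms actually see, and I must argue that the entry point $w$ chosen by $\routeTD$ really satisfies $cpl(w,e)\geq 1$ in the real coordinate system, so that the downstream tree descent from $w$ to $e$ inside $e$'s root-subtree is well-defined and attacker-free rather than only in the fake coordinates observed near $s$.
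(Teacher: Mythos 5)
Your approach shares the paper's essential machinery---Lemma \ref{thm:voute-back} reducing success of a backtracking route to the existence of a greedy path of responsive nodes, and the central observation that tree routes are $\delta_{CPL}$-greedy and cannot pass through the attacker when the perceived common prefix length of the endpoints is positive---but organizes it constructively (exhibit an explicit $\delta_{CPL}$-greedy path by following the $\delta_{TD}$-path through the region where $cpl=0$ and splicing in the tree route at the first node $w$ with $cpl(w,e)>0$) rather than via the paper's contradiction on a single ``bad step'' $v_i \to v_{i+1}$ along the discovered $\delta_{TD}$-path. Your version has the advantage of making explicit the splice that the paper's sentence ``the attacker has to control one node on the tree route'' silently relies on (the paper implicitly needs the prefix of the TD-path together with the tree route from $v_i$ to form the alternative $\delta_{CPL}$-greedy witness), and the step-wise claim at $cpl=0$ nodes plus the up/down tree-route calculation is a correct and clean way to do the accounting. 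The concern you flag about \emph{ATT-RAND} fake coordinates near $s$ is exactly what the paper's phrase ``the perceived common prefix length of $v_i$'s and $\e$'s coordinates should be 0'' is addressing, and your resolution (the entry node $w$ with $cpl(w,e)>0$ must lie outside the attacker's corrupted region, hence in $e$'s genuine subtree) is sound.

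There is, however, one genuine gap: the sub-case you dispose of with ``if $e$ lies in $A$'s subtree then $e$'s published coordinate is inconsistent with the tree and neither algorithm can reach $e$, so the implication is trivial'' is not justified and is in fact false in general. Under \emph{ATT-RAND}, both algorithms route toward $e$'s \emph{published} (fake-prefixed) coordinate, and a $\delta_{TD}$-greedy path of responsive nodes from $s$ to $e$ can exist: a shortcut edge can lead directly from outside into the subtree rooted at $A$'s child $c_e$ that contains $e$, and inside that subtree the coordinates are internally consistent, so the descent to $e$ proceeds normally. You therefore cannot discharge this case vacuously; it must be treated like the others. Fortunately your own machinery suffices, with a symmetric argument to the one you give for $s$: the first TD-path node $w$ with $cpl(w,e)>0$ in the observed coordinates necessarily lies inside $c_e$'s subtree, because $e$'s observed prefix is the attacker's random $\textnormal{fake}_{c_e}$, which with overwhelming probability collides neither with any genuine coordinate prefix outside $A$'s subtree nor with the fake prefixes handed to $A$'s other children; the tree route from $w$ to $e$ therefore stays inside $c_e$'s subtree, which contains neither $A$ nor the root, and is attacker-free. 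The paper's contradiction avoids this case split because its single inference ``attacker on the tree route $\Rightarrow$ attacker is an ancestor of $v_i$ or of $e$ $\Rightarrow$ perceived $cpl(\id(v_i),\id(e))=0$'' treats the two directions uniformly.
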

\begin{figure}[ht]
\centering
\includegraphics[width=0.7\linewidth]{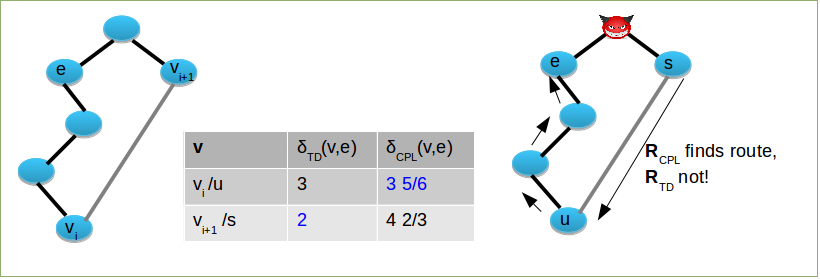}
\caption[Illustration of Resistance Proof]{Illustrating the proof of Theorem \ref{thm:voute-betterAS}: left: $v_{i+1}$ is closer to destination
$\e$ than $v_i$ for distance $\delta_{TD}$ but not for $\delta_{CPL}$; right: pair $(\s, \e)$ for which $\routeRAPCPL$ is successful as $\s$ forwards to $u$,
but $\routeTD$ is not successful because $\s$ forwards to the attacker.}
\label{fig:eval-resi}
\end{figure} 
\begin{proof}
We prove the claim by contradiction. Assume that there is pair $\s, \e$ such that the algorithm $\routeTD$
terminates successfully while $\routeRAPCPL$ does not.
Let $p=(v_0, v_1, \ldots , v_l)$ with $v_0=\s$ and $v_l=\e$ denote the discovered route.
By Lemma \ref{thm:voute-back}, $p$ is a greedy path for distance $\delta_{TD}$  but not for $\delta_{CPL}$.
In other words, there exists $0\leq i < l$ such that i) $\delta_{TD}(\id(v_{i+1}), \id(\e)) < \delta_{TD}(\id(v_i),\id(\e))$ and
ii) $\delta_{CPL}(\id(v_{i+1}), \id(\e)) \geq \delta_{CPL}(\id(v_i),\id(\e))$.
By the definitions of both distances in Eq. \ref{eq:TD} and Eq. \ref{eq:CPLD}, this implies
that $cpl\left(\id(v_{i+1}), \id(\e)\right)< cpl\left(\id(v_{i}), \id(\e)\right)$ and $|\id(v_{i+1})|<|\id(v_{i})|$.
In other words, $v_{i+1}$'s coordinate has a lower common prefix length to $\id(\e)$ and is shorter than $\id(v_i)$.
The right side of Figure \ref{fig:eval-resi} displays an example. 

We base our contradiction upon the following observation concerning routes in trees. 
Consider the \emph{tree route} between two nodes, i.e., the path between them using only tree edges.
Along the tree route, 
the common prefix length stays constant until the least common ancestor is reached and then increases.
Now, if $\id(v_{i+1})$ has a shorter common prefix with $\id(\e)$ than $\id(v_i)$, $v_{i+1}$ is not 
contained in the tree route. 
Furthermore, as the routing algorithm $\routeRAPCPL$ does not successfully discover a route, the attacker has to control one node
on the tree route. 

We can use the above observation to establish contradictions for both \emph{ATT-RAND} and \emph{ATT-ROOT}. 
Note that if the common prefix length decreases when forwarding to $v_{i+1}$, we need to have $cpl\left(\id(v_{i}), \id(\e)\right)>0$.
For the attack strategy \emph{ATT-RAND}, the attacker on the tree route is either an ancestor of $v_i$ or of $\e$.
However, the attacker replaces the prefixes of all its children and hence descendants, so that the perceived common prefix length of
$v_i$' and $\e$'s coordinates should be 0 unless there exists an attacker-free tree route. 
This is a clear contradiction. 
Similarly, if $cpl\left(\id(v_{i}), \id(\e)\right)>0$, $v_i$ and $\e$ have a common ancestor aside from the root. In particular, the tree route
does not pass the root. When applying \emph{ATT-ROOT}, the only attacker is the root, which again contradicts that there is an attacker
on the tree route. 
Thus, we have shown by contradiction that $\routeTD$ only succeeds if $\routeRAPCPL$ does.

Thus, we have shown that indeed Eq. \ref{eq:succ_implies} holds. Eq. \ref{eq:Esucc} is a direct consequence as it averages over
all source-destination pairs and systems. It remains to show that the converse of Eq. \ref{eq:succ_implies} does not hold.
In other words, there exist instances when $\routeRAPCPL$ terminates successfully while $\routeTD$ fails. 
We display such an example in Figure \ref{fig:eval-resi}. 
\end{proof}

While we can show that our enhancements are indeed enhancements, our theoretical analysis does not provide any absolute bounds on the success ratio. In particular, we cannot compare our success ratio to that of virtual overlays.

\subsection{Simulations}
We utilized the simulation model and set-up from Section \ref{sec:voute-eval-effi} for evaluating the efficiency and extended it to include the methodology for robustness and censorship-resistance.
So, we simulate the robustness of an overlay by subsequently selecting random failed nodes. In each step, we select a certain fraction of additional failed nodes and then determine the success ratio.  
Furthermore, we evaluate attacks using the two attack strategies \emph{ATT-RAND} and \emph{ATT-ROOT} described above. 
We first establish the overlay applying the respective attack strategy and then execute the routing for randomly selected source-destination pairs of responsive nodes. 

We compare our results to the virtual overlay VO, described in Section \ref{sec:voute-eval-effi}. 
Our attacker on VO does not manipulate the tunnel establishment but merely drops requests. 
Recall that routing in VO relies on a Kademlia DHT such that neighbors in the DHT communicate via a tunnel of trusted links.
The routing between two DHT neighbors thus fails if the attacker is contained in the tunnel. 
However, if routing between two overlay neighbors fails, the startpoint of the failed tunnel can attempt to select a different overlay neighbor
as long as it has one neighbor closer to the destination. 
We further enhance the success ratio of VO by optionally allowing backtracking in the DHT. 
In addition, we also allow for shortcuts, i.e., rather than following the tunnel to its endpoint, nodes on the path can change to a different tunnel with an endpoint closer to the destination.
Thus, we maximize the chance of successful delivery in VO by backtracking and shortcuts in addition to the use of non-strategic attacker.

\paragraph{Set-up:} We used the embedding and routing algorithms as parametrized in Section \ref{sec:voute-eval-effi}.

In order to evaluate the robustness,  we removed up to $50$\% of the nodes in steps of $1$\%. During the process of removing nodes, individual nodes inevitably became disconnected from the giant component, so that routing between some pairs was no longer possible. For this reason, we only considered the results for source-destination pairs in the same component. Our results are presented for $1$, $5$, and $15$ trees only.

The number of edges $x$ controlled by the adversary $A$ were chosen as $x=2^i \times \lceil \log_2 n \rceil$ with $0\leq i \leq 6$ and $\lceil \log_2 n \rceil =16$. So, up to $1,024$ attacker edges were considered. In particular, $x=1024> \frac{\sqrt{n}}{\log n}$, a common asymptotic bound on the number of edges to honest nodes considered for Sybil detection schemes \cite{danezis2009sybilinfer}. 
For quantifying the achieved improvement, we compared our approach to the resilience of the original PIE embedding and routing, i.e., $1$ tree, $\delta_{TD}$, and no backtracking.

For VO, we used a degree of parallelism of $\alpha=1$. 
Since backtracking was applied, all values of $\alpha > 0$ resulted in the same success ratio, because regardless of the value of $\alpha$, the routing succeeded if and only if a greedy path in the virtual overlay existed. 
Thus, restricting our evaluation to $\alpha=1$ did not impact our results with regard to the success ratio. 

We averaged the results over $20$ runs with $10,000$ source-destination pairs each. Results are presented with $95 \%$ confidence intervals.

\paragraph{Expectations:}
We expect that the use of backtracking already increases the success ratio considerably for $\trees=1$.
However, for large failure ratios or a large number of attacker edges, the single-connected nature of the tree should result in a low success ratio. 
By using multiple trees, we expect to further increase the success ratio until close to 100\% of the paths correspond to a greedy path and hence a route in at least one embedding.  

For the robustness to failures, the original distance function \emph{TD} should result in a higher success ratio than \emph{CPL} because of its shorter routes, as seen in Section \ref{sec:voute-eff-sim}, and thus lower probability to encounter a random failed node.  
However, by Theorem \ref{thm:voute-betterAS}, \emph{CPL} increases the success ratio in contrast to the original distance.

Our first attack strategy, \emph{ATT-RAND}, should not have a strong impact as the fraction of controlled edges is low and the attacker usually does not have an important position in the trees.
In contrast, we expect many requests to be routed via the root, so that at least for a low number of trees, \emph{ATT-ROOT} should be an effective attack strategy. 

In comparison, our attack on VO does not enable the attacker to obtain a position of strategic importance, so that the impact of the attack should be much less drastic than \emph{ATT-ROOT}. 
However,  communication between DHT neighbors relies on one tunnel whereas tree embeddings provide multiple routes.
Thus, when using multiple diverse trees, we expect our approach to be similarly effective as VO, possibly even more effective.

\paragraph{Results:}
While the results verified our expectations with regard to the advantage of the distance \emph{TD} for random failures and of \emph{CPL} for attacks,
the observed differences between the two distances were negligible, i.e., less than $0.1$\%. Hence, we present the results for \emph{CPL} in the following with the exception of the results for the original PIE embedding.   

We start by evaluating the robustness to random failures.
The results, displayed in Figure \ref{fig:failures}, indicate that the use of multiple embeddings considerably increased the robustness.
The success ratio for $\trees=1$ was low, decreasing in a linear fashion to less than $30$\% for a failure ratio of $50$\%.
In contrast, for $\trees=15$, the success ratio exceeded $90$\%.
Though the number of embeddings was the dominating factor, the tree constructing algorithm also strongly influenced the success ratio.
For $\trees > 1$,  aiming to choose distinct parents improved the robustness to failures because of the higher number of distinct routes. 
For example, when routing in $5$ parallel embeddings, the success ratio was above $80$\% for \emph{DIV-RAND}.
In contrast, \emph{BFS} had a success ratio below $70$\%. 
In summary, the robustness to failures was extremely high for multiple embeddings, enabling a success ratio of more than $95$\% for up to $20$\% failed nodes. The robustness was further increased by using \emph{DIV-RAND} or \emph{DIV-DEP} rather than \emph{BFS}, showing that even such relatively simple
 schemes can achieve a noticeable improvement.

Now, we consider the censorship-resistance for $x=16$ attacking edges, as displayed in Figure \ref{fig:attacks}.
If the adversary $A$ was unable to manipulate the root selection, the success ratio was only slightly below $100$\%. Even if $\trees=1$, more than $99.5$\% of the routes were successfully discovered. The high resilience against \emph{ATT-RAND} was to be expected, considering that the attack was only slightly more severe than failure of one random node.
If the attacker was able to become the root in all trees, the success ratio dropped to about $93$\% for $\trees=1$. 
However, with multiple trees, the ratio of \emph{ATT-ROOT} was close to $100$\%.
The impact of the tree construction was small but noticeable.
So, \emph{BFS} generally resulted in a slightly lower success ratio.
Hence, by using multiple embeddings and backtracking, the resilience to an adversary that can establish only $\lceil \log_2 |V| \rceil$ is such that nearly all routes are successfully found. 

\begin{figure*}[ht]
\centering
\subfloat[Robustness]{\label{fig:failures}\includegraphics[width=0.33\linewidth]{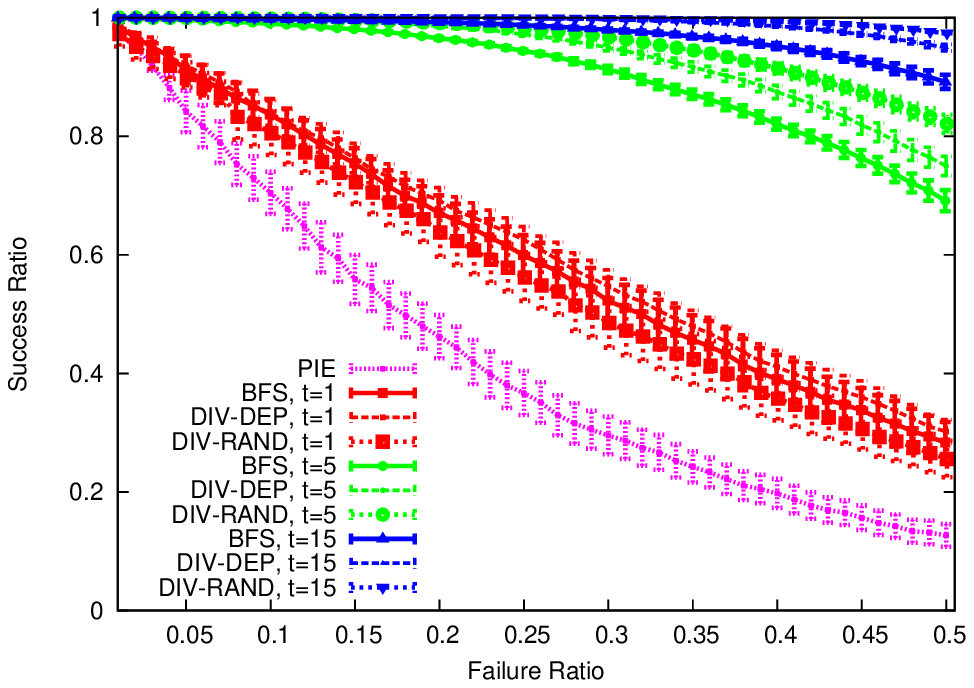}}
\hfill
\subfloat[Attacks: 16 Edges]{\label{fig:attacks}\includegraphics[width=0.33\linewidth]{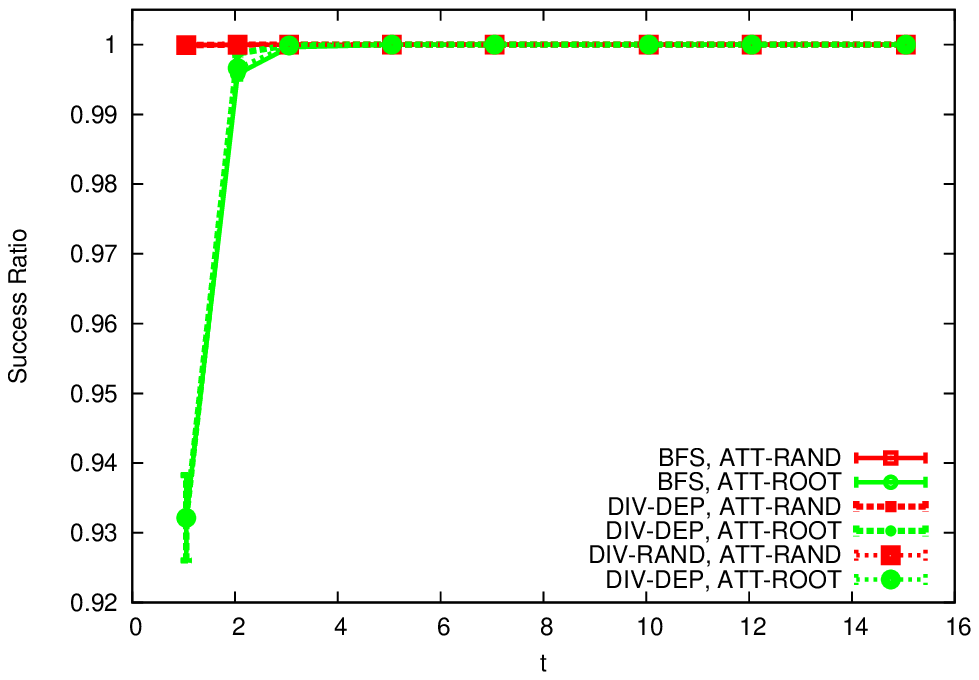}}
\hfill
\subfloat[Attacks: Up to 1024 edges]{\label{fig:attacksEdges}\includegraphics[width=0.33\linewidth]{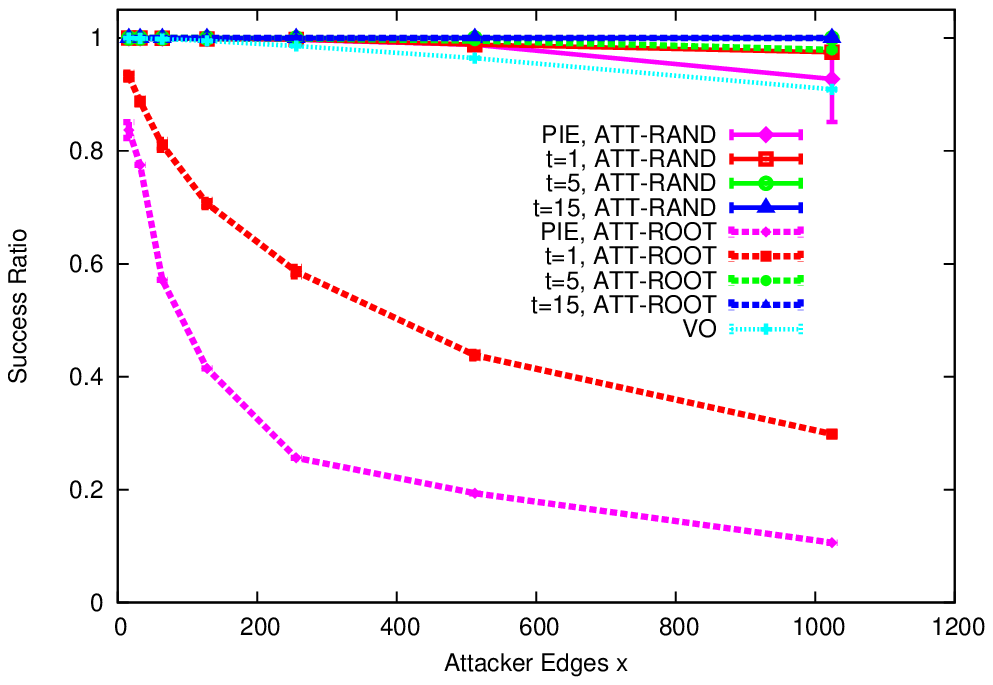}}
\caption[Attack Resilience of VOUTE]{a) Robustness to failures for distance \emph{CPL} and b),c) Censorship-Resistance of tree routing for distance \emph{CPL} to adversaries which are either able to undermine the root election (\emph{ATT-ROOT}) or are unable to do so (\emph{ATT-RAND}) for b) $x=16$ attacking edges, and c) up to $1,024$ attacking edges and tree construction \emph{DIV-DEP}}
\label{fig:resi}
\end{figure*}

For an increased number of attacking edges $x$, the success ratio remained close to $100$\% when more than one tree was used for routing, as displayed in Figure \ref{fig:attacksEdges} for \emph{DIV-DEP}. However, for one tree, the success ratio decreased drastically if an attacker could undermine the root selection. For $x=1024$, i.e., if the attacker controlled edges to roughly $1.7$\% of the nodes, the success ratio for $\trees=1$ decreased to slightly less than $30$\%.
In contrast, if $\trees=5$ or $\trees=15$,  the success ratio was still $97.9$ or $99.9$\%, respectively.

In order to quantify the improvements provided by our resilience enhancements, we compared the results for our approach with the PIE embedding.
As can be seen from Figure \ref{fig:failures}, the success ratio dropped much more quickly for PIE than for the improved approaches.
For an adversary with $x=16$ connections to honest nodes, PIE suffered from more than twice the numbers of failed requests than the remaining systems (Figure \ref{fig:attacksEdges}) because it relies on only one tree and does not apply backtracking. 
When increasing the number of attacker edges, the success ratio dropped further to less than $15$\% for $x=1024$.
Our approach achieved more than twice the success ratio even for $\trees=1$.

In contrast to PIE, VO exhibited a rather high success ratio as displayed in Figure \ref{fig:attacksEdges}.
VO's advantage in contrast to $\trees=1$ holds despite VO's longer routes (see Section \ref{sec:voute-eff-sim}). 
The reason for VO's lower vulnerability lies in the absence of strategic manipulation. 
While greedy embeddings allow  the attacker to assume an important role, our attacker in VO does not attract a disproportional fraction of traffic.
However, establishing multiple trees ensures that the role of the root is effectively mitigated, so that the censorship-resilience of VO is slightly lower than VOUTE's resilience for $5$ or more parallel embeddings. 

\paragraph{Discussion:}
We have shown that multiple embedding and backtracking enable high resilience, outperforming state-of-the-art approaches.
Here, we focused on node-to-node communication. 
While content retrieval results in longer routes, we expect the success ratio to be similar as backtracking in the DHT allows
the use of multiple paths.
In addition, the number of replicas per content can be adjusted to increase the success ratio.

While Theorem \ref{thm:voute-betterAS} shows the advantage of \emph{CPL} in the presence of failures, the actual advantage is negligible,
so that it seems more sensible to use the original distance \emph{TD} due to its higher efficiency.

\paragraph{}In summary, our enhancements to the robustness and censorship-resistance were both needed and highly effective.

\section{Anonymity and Membership-Concealment}
\label{sec:voute-eval-ano}

We show that our return addresses provide plausible deniability.

\begin{theorem}
\label{thm:partial}
Let $u$ be a local attacker, which is aware only of its direct neighbors $N_u$ in the social graph, and
let $y=(y_1, \ldots , y_{\trees})$ with routing information $\tilde{k}=(\tilde{k}_1, \ldots , \tilde{k}_\trees)$  be a vector of return addresses generated by Algorithm \ref{algo:rap} for the node $u_y$. 
Then $u$ cannot identify $\e_y$ with absolute certainty using a polynomial-time algorithm $A$, i.e., $P(A(y, \tilde{k})=\e_y) < 1$ for all return address vectors $y$.
So, we guarantee possible innocence with regard to both sender and receiver anonymity in the absence of identifying side channel information such as timing analysis.
\end{theorem}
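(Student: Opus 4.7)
My plan is to establish possible innocence by exhibiting, for any observed return address $(y,\tilde{k})$, two distinct candidate receivers that are indistinguishable from the local attacker's view. Since no polynomial-time algorithm $A$ can be simultaneously correct in both worlds, the conditional probability $P(A(y,\tilde{k}) = \e_y)$ is bounded away from $1$.

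First, I would spell out exactly what $u$ knows: its own coordinates across the $\trees$ embeddings, the coordinates $\id_i(v)$ of its direct neighbors $v \in N_u$ in each tree, the transcript of its own sent and received traffic, and the observed $(y,\tilde{k},mac)$. By hypothesis $u$ does not know any coordinate outside $N_u \cup \{u\}$, nor the structure of the trees outside its immediate neighborhood, and being polynomial-time it cannot invert the cryptographic hash $h$ used in the cascade of Algorithm~\ref{algo:rap}.

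Second, I would construct the two-worlds argument. Pick two nodes $v_1, v_2 \in V \setminus (N_u \cup \{u\})$; such a pair exists for any non-trivial network. I claim that there exist two runs of Algorithm~\ref{algo:Tconst} together with the modified PIE assignment, both on the same social graph and consistent with the same local decisions of $u$ and the nodes in $N_u$, such that in one run $v_1$ holds the coordinate $x$ that produced $y$ and in the other run $v_2$ holds $x$. The justification relies on (i) the independent uniform choice of the $b$-bit element each non-root node appends to its parent's coordinate, (ii) the fact that the tree-construction rounds in Algorithm~\ref{algo:Tconst} for nodes outside $N_u \cup \{u\}$ use random bits that are independent of $u$'s observations, and (iii) the freedom to reshape subtrees away from the neighborhood of $u$. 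A coupling of the protocol's internal randomness then shows that the two runs occur with equal probability and induce identical transcripts at $u$, including the same $(y,\tilde{k},mac)$.

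Finally, I would conclude by a counting argument: any $A$ that maps $(y,\tilde{k})$ to an element of $V$ produces at most one correct answer across two indistinguishable worlds, so
\begin{align*}
P\bigl(A(y,\tilde{k}) = \e_y\bigr) \;\leq\; \tfrac{1}{2} \;<\; 1,
\end{align*}
and the hash-cascade structure of $y$ rules out any shortcut that would allow $u$ to shrink the candidate set by direct computation, since recovering $x$ from $d_1 = h(\tilde{k}\oplus a_1)$ would contradict the hardness of $h$. The main obstacle will be the middle step: rigorously verifying that swapping $v_1$ and $v_2$ yields another legitimate and equally likely execution without perturbing $u$'s view. The technical care needed concerns the collision-avoidance step in Algorithm~\ref{algo:rap} (the re-choice of padding when $a'_{l+1}$ clashes with a child's coordinate) and the need to ensure that the reshuffled coordinates of $v_1$'s and $v_2$'s descendants do not accidentally coincide with coordinates assigned to $N_u \cup \{u\}$, which follows from the entropy $2^b$ of each coordinate element being chosen large enough that collisions occur with negligible probability.
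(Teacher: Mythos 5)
Your proposal takes a genuinely different route from the paper's proof. The paper proceeds by a direct case analysis of what the local attacker $u$ can see: letting $v_i \in N_u$ be the neighbor closest to $y_i$, it distinguishes (i) the $v_i$ differ across embeddings, (ii) some $v_i$ has $cpl(hc(\id(v_i),\tilde{k}_i),y_i)<|\id(v_i)|$, and (iii) the $v_i$ coincide and match fully; in cases (i)--(ii) the receiver is provably a non-neighbor whose coordinate $u$ cannot know, and in case (iii) $u$ cannot distinguish $v_i$ from any node that is a descendant of $v_i$ in all $\trees$ trees. You instead attempt a two-worlds indistinguishability argument: construct two equally likely executions, consistent with $u$'s entire view, in which two different nodes hold the coordinate that produced $y$. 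The underlying source of uncertainty is the same --- $u$'s ignorance of the topology beyond $N_u$ --- and both approaches yield $P(A(y,\tilde{k})=\e_y)<1$, with yours promising the stronger $\leq 1/2$.

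However, there are genuine gaps. First, you restrict both candidates to $V \setminus (N_u \cup \{u\})$, which is wrong precisely in the paper's case (iii): when all $v_i$ coincide and $cpl(hc(\id(v_i),\tilde{k}_i),y_i)=|\id(v_i)|$, the receiver may well be the neighbor $v_i$ itself, and the second consistent candidate is a descendant of $v_i$ in all trees. You must allow one candidate to lie in $N_u$, and you must confine the other to the intersection of $v_i$'s subtrees across all $\trees$ embeddings; your freeform ``reshape subtrees away from $u$'' does not by itself guarantee the cpl constraints observed by $u$ are simultaneously satisfied in every tree. Second, the step you flag as the ``main obstacle'' --- rigorously exhibiting the two coupled executions --- is exactly where the argument lives, and closing it would essentially force you to carry out the paper's case analysis anyway, because the structure of the consistent candidate set depends on which of the three observation patterns occurs. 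Third, the appeal to preimage resistance of $h$ is a red herring: since $u$ can evaluate the hash cascade forward on any coordinate it knows, the protection is not cryptographic but topological (there is no way to enumerate the coordinates of unknown nodes), which is why the paper's proof contains no cryptographic step at all. Finally, your coupling must also identify the private MAC keys of the two candidates, since $mac$ is part of $u$'s transcript; this is repairable (keys are uniform and secret) but needs to be stated.
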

\begin{figure}[ht]
\centering
\includegraphics[width=0.9\linewidth]{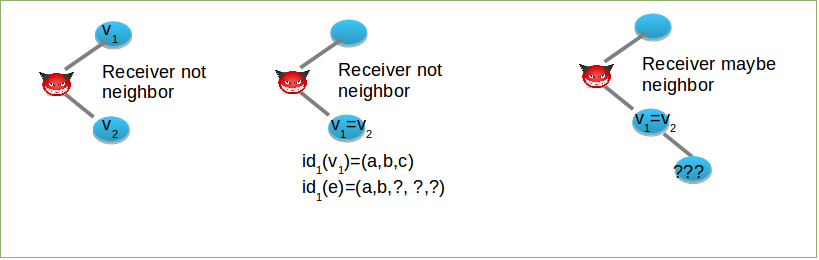}
\caption[Illustration of Anonymity Proof]{Illustrating the proof of Theorem \ref{thm:partial}: Let $v_1$ and $v_2$ be the neighbors with the closest coordinates to the receiver's coordinates $\id_1(\e)$ and $\id_2(\e)$ in the first and second embedding, respectively. 
The attacker can only infer if neighbors are not the receiver $\e$ but can not tell if they are. In particular, the attacker $A$ knows the common prefix of the receiver's coordinates and the coordinates of its neighbors but not the remaining elements of the coordinate, as indicated by the ?s in the coordinate. 
In the first scenario, $v_1 \neq v_2$ shows that the receiver is not a neighbor. 
In the second scenario, $A$ can infer that the third element of the receiver's coordinate $\id_1(\e)$ is not $c$ and hence $v_1$ is not the receiver.
In the last scenario, $A$ is unable to tell if a neighbor is indeed the receiver or if a child of the neighbor is the receiver. }
\label{fig:eval-ano}
\end{figure} 
\begin{proof}
We start by considering receiver anonymity. We consider three cases and show for each case that either i) the attacker can determine that the receiver is not a neighbor but cannot infer the coordinate of the actual receiver or ii) the attacker remains uncertain if the receiver is a neighbor or a neighbor's descendant. We illustrate the cases in Figure \ref{fig:eval-ano}.
Throughout the proof, let $v_i \in N_{u}$ be the closest neighbor of $u$ to $y_i$ for $i = 1, \ldots , \trees$.

First, assume there exist $i,j$ such that $v_i \neq v_j$.
It follows that none of $u$'s neighbors is the receiver due to the fact that the receiver can be identified as the closest node to all return addresses. 
So, $u$, not being aware of the remaining nodes and their coordinates in the system, cannot identify the receiver. 

For the second case, assume that indeed $v_i = v_j$ for all $1 \leq i,j, \leq \trees$ but there exists an $i$ such that
$cpl(hc(\id(v_i),\tilde{k}_i), y_i) < |\id(v_i)|$, i.e., the common prefix length of $\id(v_i)$ and the target coordinate is less than the length of $v_i$'s coordinate.
Then $v_i$ cannot be the receiver because at least the last element in the $i$-th coordinate of $v_i$ does not agree with $\id_i(\e_y)$.
So, again the receiver is not a neighbor of $u$ and hence $u$ is unable to identify $\e$ due to its limited view of the overlay.

Third, assume that indeed $v_i = v_j$ for all $1 \leq i,j, \leq \trees$ and $cpl(hc(\id(v_i),\tilde{k}_i), y_i) = |\id(v_i)|$ for all $i$. Then, the node $v_i$ can potentially be the receiver but so can any node $w$ that is a descendant of $v_i$ in all trees.  
Any return address vector of $w$ would result in the same results as a return address vector of $v_i$ from $u$'s local point of view.
Due to its restricted topology knowledge, $u$ is unaware if such a descendant $w$ exists, and hence can only guess that $\e_y$ is the receiver but cannot be certain.

Thus, receiver anonymity follows as the return address does not allow the unique identification of the receiver.
Sender anonymity follows analogously as a node can always forward a request from a child. 
\end{proof}

Note that the above proof does not require the application of the hash cascade. However, without the application, an attacker can always infer the common prefix length of two receiver addresses.
If we apply the hash cascade, the attacker can only determine the distance of receiver addresses to its own coordinates but might be unable to detect how close two addresses actually are. 
In this manner, the attacker can only infer very limited topology information. By hiding the topology of the social graph, we prevent the identification of users by comparing a pseudonymous topology to an external social graph.

\section{Conclusion}
\label{sec:conc}

We have introduced a privacy-preserving, efficient, and resilient design for F2F overlays.
For this purpose,  we have developed an algorithm for the generation of anonymous return addresses
Furthermore, we have designed multiple parallel network embeddings to enable both efficiency and resilience, as validated by an extensive simulation study.
 
Extending our simulation results, we are currently integrating our algorithms in an existing F2F overlay and have started initial testbed studies to better understand the system and its performance in real environments.

\bibliographystyle{unsrt}
\bibliography{diss}

\section*{Appendix}

In this section, we show how to obfuscate return addresses in VOUTE further.
PPP addresses are then generated by adding an additional layer of symmetric encryption to RAP return addresses generated by Algorithm \ref{algo:rap}. 
The idea of the approach is to allow $u$ to determine if the common prefix length of a coordinate is longer than $cpl(\pre(y),\id(u))$ but not the actual length.
For this reason, the additional layer can only be applied when using the common prefix length as a distance.
In a nutshell, we generate PPP return address through symmetrically encryption of a RAP return address using key material only known within certain subtrees.

Let $Enc: H \times \keys_{Sym} \rightarrow H$ be a semantically secure symmetric encryption function onto  $h$'s image $H$ with keyspace $\keys_{Sym}$. 
$Dec: H \times Sym \rightarrow H$ denotes the corresponding decryption.
For each subtree of the spanning tree, we distribute keys.
To achieve that, each internal node $w$ at level $l$ generates a symmetric key $k_l(w)$ by a pseudo-random key generation algorithm $SymGen$. 
Subsequently, $w$ distributes $k_l(w)$ to all its descendants. 
In this manner, a node $v$ at level $\tilde{l}$ obtains keys $k_1(v), \ldots , k_{\tilde{l}-1}(v)$ such that $k_{\lambda}(v)$ was generated by $v$'s ancestor at level $\lambda$ and forwarded to $v$ along the tree edges.
So, $k_\lambda(v)$ is known to all nodes having a common prefix length of at least $\lambda$ with $v$.
After generating a RAP return address $y=(d_1, \ldots , d_\lengthAd)$, $v$ additionally encrypts the $\lambda+1$-th element with the key
$k_\lambda(v)$, constructing the return address $y'=(d'_1, \ldots , d'_\lengthAd)$ with
\begin{align}
\label{eq:sym}
d'_j = \begin{cases}
Enc(k_{j-1}(v), d_j), & 2\leq j \leq l \\
d_j, & \textnormal{otherwise}
\end{cases}.
\end{align}
The second case in Eq. \ref{eq:sym} treats the first element, which remains unencrypted, and the randomly chosen padding, which does not agree with any coordinate.  
After generating $y'$, $v$ publishes $y'$, the routing information $\tilde{k}$ for generating $y$ and $mac(\keymac(v), y')$. 
The pseudo code of the additional encryption is displayed in Algorithm \ref{algo:ppp}.

A third realization $\routePPPCPL$ of the routing algorithm $\routeN$ is given by the construction of PPP addresses. 
During routing, a node $u$ at level $l$ first applies the decryption function to the second to $l+1$-th element of the return address $y'=(d'_1, \ldots , d'_\lengthAd)$. 
So, $v$ obtains $f(y')=(z_1, \ldots, z_{l+1})$ with
\begin{align*}
z_j = 
\begin{cases}
d'_1, & j = 1 \\
Dec(k_{j-1}(u), d'_j),  &\textnormal{ otherwise}
\end{cases}.
\end{align*}
Afterwards, $u$ determines $cpl(f(y'), cash(\tilde{k}, c))$ for all coordinates $c$ in its neighborhood. Note that $cpl(f(y'), cash(\tilde{k}, c)$ is only a lower bound on $cpl(\pre(y),c)=cpl(y,cash(\tilde{k}, c)$ because $u$ is not able to correctly decrypt some elements of $y'$.
Based on the common prefix length, $v$ can evaluate the diversity measure
\begin{align}
\label{eq:distPPP}
\distIndex{PPP-CPL}^u(y',\tilde{k},c) = \delta_{CPL}(f(y'), cash(c, \tilde{k}))
\end{align} 
for the distance $\delta_{CPL}$ defined in Eq. \ref{eq:CPLD}.
In this manner, the node $u$ obtains a set of all neighbors closer to the destination than itself.
So, $u$ chooses a random node from this set as the next hop. 

We here give a short intuition on why Algorithm \ref{algo:ppp} indeed generates PPP anonymous return addresses.
A formal proof is sketched in Section \ref{sec:voute-security}.
Let $u$ be a arbitrary node and $y'$ be a return address generated by $v$, a node at level $l_v$.
If $cpl(\id(v), \id(u))=\lambda$, $u$ correctly decrypts the first $\lambda+1$ elements of $y'$ because $k_i(u)=k_i(v)$ for $i=1\ldots \lambda$. 
Due to the semantic security of the symmetric encryption, $u$ cannot infer information about the remaining elements of $y$ from $d'_{\lambda+2}, \ldots , d'_{l_u}$ because $u$ does not know $k_i(v)$ for $i > \lambda+1$.
Thus, $y'$ indeed only reveals if a coordinate $c$ shares a longer common prefix with $\pre(y)$ than $\id(v)$.

\begin{minipage}[ht]{.9\linewidth}
\vspace{-1em}
\begin{algorithm}[H]
\caption{\small addPPPLayer()}
\label{algo:ppp}
\begin{algorithmic}[1]
\small{
\commIt{Input: RAP return address $y=(d_1, \ldots , d_\lengthAd)$}
\commIt{Internal State: Keys $k_1(v),\ldots , k_{l-1}(v)$, $Enc_{Sym}$}
\FOR{$i=2\ldots l$}
\STATE $d_j \leftarrow Enc_{Sym}(k_{j-1}(v), d_j)$ \commItLine{Encrypt element $j$}
\ENDFOR
}
\end{algorithmic}
\end{algorithm}
\end{minipage}
\end{document}